\tikzstyle{max}=[thick,draw,minimum size=1.4em,inner sep=0em]
\tikzstyle{min}=[diamond,thick,draw,minimum size=1.4em,inner sep=0em]
\tikzstyle{ran}=[circle,thick,draw,minimum size=1.4em,inner sep=0em]
\tikzstyle{target}=[rectangle, rounded corners, thick,draw,minimum size=1.4em,inner sep=0em]
\tikzstyle{act}=[circle,thick,draw,fill,minimum size=.7em,inner sep=0em]
\tikzstyle{mc}=[rounded corners,thick,draw,minimum size=1.4em,inner sep=.5ex]
\tikzstyle{tran}=[thick,draw,->,>=stealth]
\tikzstyle{loop left}=[tran, to path={.. controls +(150:.5)
\tikzstyle{loop right}=[tran, to path={.. controls +(30:.5)
\tikzstyle{loop above}=[tran, to path={.. controls +(60:.5)
\tikzstyle{loop below}=[tran, to path={.. controls +(240:.5)
\newcommand{\tm}{\mathit{time}}
\newcommand{\Nset}{\mathbb{N}}
\newcommand{\Rset}{\mathbb{R}}
\newcommand{\calA}{\mathcal{A}}
\newcommand{\calH}{\mathcal{H}}
\newcommand{\calV}{\mathcal{V}}
\newcommand{\wait}{\mathit{wait}}
\newcommand{\attack}{\mathit{attack}}
\newcommand{\EU}{\mathrm{EAU}}
\newcommand{\Val}{\operatorname{DVal}}
\newcommand{\AVal}{\operatorname{AVal}}
\newcommand{\Obs}{\Omega}
\newcommand{\NP}{\mathsf{NP}}
\newcommand{\PSPACE}{\mathsf{PSPACE}}
\newcommand{\Prob}{\operatorname{Prob}}
\newcommand{\Mem}{\mathit{mem}}
\newcommand{\Attack}{\mathit{Att}}
\newcommand{\hole}{\mathit{Hole}}
\newcommand{\ltime}{\ell}
\newcommand{\dhat}[1]{\smash{\hat{#1}}}
\newcommand{\dm}{d_{\max}}
\newcommand{\Steal}{\mathrm{Steal}}
\newcommand{\changesize}{\mathrm{CS}}
\newcommand\mem{\operatorname{mem}}
\let\hat\widehat
\DeclareRobustCommand\onedot{\futurelet\@let@token\@onedot}
\def\@onedot{\ifx\@let@token.\else.\null\fi\xspace}
\newcommand\eg{{e.g}\onedot} 
\newcommand\ie{{i.e}\onedot}
\newcommand{\Regstar}{\textsc{Regstar}\xspace}
\newcommand{\tran}[1]{\stackrel{\raisebox{-.3ex}{\scriptsize$#1$}}{\rightarrow}}
\def\ie{i.e\onedot}
\theoremstyle{plain}
\newtheorem{theorem}{Theorem}
\newtheorem{remark}[theorem]{Remark}
\newtheorem{example}[theorem]{Example}
\def\DC@#1#2#3{\uccode`\~=`#1\relax
  \m@th
  \afterassignment\DC@x\count@#3\relax{#1}{#2}}
\def\DC@x#1\relax#2#3{\ifnum\z@>\count@
    \expandafter\DC@centre
  \else
    \expandafter\DC@right
  \fi
  {#2}{#3}{#1}}
\def\DC@centre#1#2#3{\let\DC@end\DC@endcentre
  \uppercase{\def~}{\egroup\setbox\tw@=\hbox\bgroup{#2}}\setbox\tw@=\hbox{{\phantom{{#2}}}}\setbox\z@=\hbox\bgroup\catcode`#1=13 }
\def\DC@endcentre{\egroup
    \ifdim \wd\z@>\wd\tw@
      \setbox\tw@=\hbox to\wd\z@{\unhbox\tw@\hfill}\else
      \setbox\z@=\hbox to\wd\tw@{\hfill\unhbox\z@}\fi
    \box\z@\box\tw@}
\def\DC@right#1#2#3{\ifx\relax#3\relax
    \hfill
    \let\DC@rl\bgroup
  \else
    \edef\DC@rl{to\the\count@\dimen@ii\bgroup\hss\hfill}\count@\@gobble#3\relax
  \fi
  \let\DC@end\DC@endright
  \uppercase{\def~}{\egroup\setbox\tw@\hbox to\dimen@\bgroup{#2}}\setbox\z@\hbox{1}\dimen@ii\wd\z@
   \dimen@\count@\dimen@ii
   \setbox\z@\hbox{{#2}}\advance\dimen@\wd\z@
   \setbox\tw@\hbox to\dimen@{}\setbox\z@\hbox\DC@rl\catcode`#1=13 }
\def\DC@endright{\hfil\egroup\box\z@\box\tw@}
\newcolumntype{D}[3]{>{\DC@{#1}{#2}{#3}}c<{\DC@end}}
\newcolumntype{d}[1]{D{+}{\,$\pm$\,}{#1}}
\title{On-the-fly Adaptation of Patrolling Strategies in Changing Environments}
\author[\space]{Tom\'a\v s Br\'azdil}
\author[\space]{David Kla\v ska} 
\author[\space]{Anton\'in Ku\v cera}
\author[\space]{V\'it Musil}
\author[\space]{Petr Novotn\'y}
\author[\space]{Vojt\v ech \v Reh\'ak}
\affil[\space]{Masaryk University\\
	Faculty of Informatics\\
	Brno, Czechia}
\begin{document}
\maketitle

\begin{abstract}
We consider the problem of efficient patrolling strategy adaptation in a changing
environment where the topology of Defender's moves and the importance of
guarded targets change unpredictably. The Defender must instantly switch to a
new strategy optimized for the new environment, not disrupting the ongoing
patrolling task, and the new strategy must be computed promptly under all
circumstances. Since strategy switching may cause unintended security
risks compromising the achieved protection, our solution includes mechanisms for detecting and mitigating this problem.
The efficiency of our framework is evaluated experimentally.
\end{abstract}

\section{Introduction}
\label{sec-intro}

In \emph{patrolling games}, a \emph{Defender} moves among vulnerable targets and strives to detect a possible ongoing attack.
The targets are modeled as vertices in a directed graph, where the edges correspond to admissible moves of the Defender.

An attack at a target $\tau$ takes $d(\tau)$ time units to complete successfully.
If an initiated attack is \emph{not} discovered in the next $d(\tau)$ time units, the Defender loses a utility determined by the cost of~$\tau$. The \emph{protection value} of a Defender's strategy $\sigma$ is the expected Defender's utility guaranteed by $\sigma$ against an arbitrary Attacker's strategy. 

\emph{Adversarial} patrolling assumes a powerful Attacker who can observe Defender's moves, know the Defender's strategy and use this information to identify the best attack opportunity.
The Defender's moving strategy is typically \emph{randomized}
\citep{KKMR:Regstar-UAI}
to prevent the Attacker from fully anticipating future moves.
The adversarial setting is particularly apt when the real Attacker's abilities are \emph{unknown} and certain protection degree is required even in the worst case.

Existing works focus on computing a Defender's strategy (moving plan) maximizing the protection value in a \emph{fixed} patrolling graph. This is challenging on its own because even special variants of the problem are $\PSPACE$-hard \citep{HO:UAV-problem-PSPACE}. However, having the underlying graph fixed is a significant limitation since
the environment \emph{does} change in real-life use cases over time and the Defender is required to adapt its strategy on the fly.
For instance, admissible moves of a police patrol are influenced by car accidents or traffic intensity, patrolling drones are affected by weather~etc.
The target costs also naturally evolve; for example, the cost of a storage place decreases when emptied,~etc.

\begin{figure}[t]
\centering
\begin{tikzpicture}[x=1cm,y=1cm,xscale =.8, yscale=.6,every node/.style={scale=.8}]
\draw[tran, very thick,->] (-5,0) -- (5,0);
\node[draw=none] at (5,-.4) {\textit{time}};
\filldraw[black] (0,0) circle (3pt) node[below=1ex]{$t$};
\filldraw[black] (-4.5,0) circle (2pt) node[below=1ex]{$t_0$};
\filldraw[black] (-2.7,0) circle (2pt) node[below=1ex]{$t_0{+}d(\tau)$};
\filldraw[black] (-.8,0) circle (2pt) node[below=1ex]{$t_1$};
\filldraw[black] (1,0) circle (2pt) node[below=1ex]{$t_1{+}d(\tau)$};
\filldraw[black] (2,0) circle (2pt) node[below=1ex]{$t_2$};
\filldraw[black] (3.8
,0) circle (2pt) node[below=1ex]{$t_2{+}d(\tau)$};
\draw[thick,dotted,rounded corners] (-4.5,0) -- +(0,.4) -- node[above] {\textit{attacking } $\tau$}  +(1.8,0.4) -- +(1.8,0);
\draw[thick,dotted,rounded corners] (-.8,0) -- +(0,.4) -- node[above] {\textit{attacking } $\tau$}  +(1.8,0.4) -- +(1.8,0);
\draw[thick,dotted,rounded corners] (2,0) -- +(0,.4) -- node[above] {\textit{attacking } $\tau$}  +(1.8,0.4) -- +(1.8,0);
\draw [decorate,decoration={brace,amplitude=5pt},xshift=-1pt,yshift=0pt] (0,-1) -- (-5,-1) node [black,midway,yshift=-0.6cm] {\textit{Defender uses $\sigma_1$ in $G_1$}};
\draw [decorate,decoration={brace,amplitude=5pt},xshift=1pt,yshift=0pt] (5,-1) -- (0,-1) node [black,midway,yshift=-0.6cm] {\textit{Defender uses $\sigma_2$ in $G_2$}};
\end{tikzpicture}
\caption{
The coverage of the attack initiated at time $t_1$ short before the strategy switch can be very low due to the ``incompatibility'' of strategies $\sigma_1$ and $\sigma_2$.} 
\label{fig-sec-hole}
\end{figure}
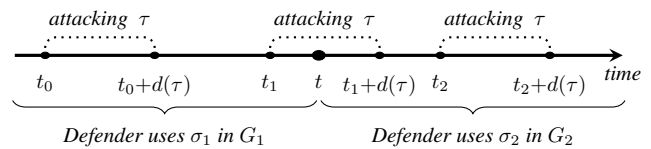

When the patrolling graph $G_1$ changes into $G_2$, the current Defender's strategy $\sigma_1$ must be promptly replaced with another strategy $\sigma_2$ optimized for the new graph. In principle, $\sigma_2$ can be computed by one of the existing strategy synthesis algorithms for \emph{fixed} patrolling graphs with $G_2$ on input. However, we show that this approach has a major \emph{conceptual flaw}. 
Namely, ignoring the functionality of $\sigma_1$ when constructing $\sigma_2$ may lead to creating unnecessary \emph{security holes} caused by the ``incompatibility'' of $\sigma_1$ and~$\sigma_2$. 
Furthermore, existing algorithms for fixed patrolling graphs are \emph{not} sufficiently efficient to be run under real-time constraints.

To understand the origin and impact of security holes, consider the scenario of Fig.~\ref{fig-sec-hole}.
Here a patrolling graph $G_1$ changes into $G_2$ at time $t$, and a Defender's strategy $\sigma_1$ is replaced with $\sigma_2$.
Since $\sigma_1$ and $\sigma_2$ are optimized for $G_1$ and $G_2$ respectively, they plan visits to all targets (including $\tau$) so that the expected damage is constrained by the protection values of $\sigma_1$ and $\sigma_2$.
An attack at $\tau$ initiated at time $t_0\le t-d(\tau)$ is fully covered by $\sigma_1$, and an attack initiated at time $t_2\geq t$ is fully covered by $\sigma_2$.
Hence, these attacks are no more dangerous than others.
Now consider an attack initiated at time $t_1$ ``short before'' the strategy switch.
If $\sigma_2$ ignores the functionality of $\sigma_1$, it may happen that   
$\sigma_1$ does not patrol $\tau$ in the first $t-t_1$ time units, and $\sigma_2$ omits $\tau$ in the next $d(\tau)-(t-t_1)$ time units (\ie, both strategies plan to visit $\tau$  ``later'').
If this happens, switching from $\sigma_1$ to $\sigma_2$ at time~$t$ creates a temporary but \emph{exceptionally dangerous} attack opportunity, \ie, a \emph{security hole}.
A simple concrete instance with quantitative analysis is given in Example~\ref{exa-change}.  

Large security holes are particularly awkward when environmental changes are frequent. Regardless of their frequency, 
security holes compromise the protection quality and cannot be ignored when we aim at providing  robust security guarantees under all circumstances.
In general, the difference between $G_1$ and $G_2$ may be so large that creating security holes becomes \emph{unavoidable} (see Example~\ref{exa-unavoidable}).
This motivates the problems of algorithmic \emph{detection}, \emph{analysis}, and \emph{mitigation} of security holes for a given pair of strategies~$\sigma_1$~and~$\sigma_2$.

The term ``Defender'' actually refers to the whole patrolling infrastructure, including systems for observing environmental changes, synthesizing new strategies, and deploying them to the moving agents.
Hence, we assume the Defender observes environmental changes when they happen, and it has sufficient computational resources at its disposal. For the Attacker, we keep the \emph{worst-case} approach, assuming it can observe Defender's moves, environmental changes when they happen and knows the Defender's strategies before/after the change. Furthermore, when evaluating the achieved protection, we assume the environment changes at the moment \emph{least convenient} for the Defender. Consequently, the constructed strategies are \emph{resistant even to sophisticated attacks when the Attacker utilizes all of this information}.

\paragraph{Contribution}

\emph{We efficiently solve the problem of on-the-fly patrolling strategy adaptation in a changing adversarial environment}.
Our approach overcomes the aforementioned problems and is applicable to real-world scenarios. Namely:
\begin{itemize}
\item[(1)] We introduce an appropriate formal model for changing environments and strategy switching.
\item[(2)] We formalize the concept of a security hole. We design an efficient algorithm for detecting and estimating security holes caused by a given strategy switch. 
\item[(3)] We design an algorithm for computing a Defender's strategy $\sigma_2$ replacing the original strategy $\sigma_1$ when the underlying patrolling graph $G_1$ changes into $G_2$. This algorithm reduces the danger of creating large security holes and it is \emph{sufficiently efficient} to be run on the fly.
\item[(4)] We show that, under certain conditions, security holes can be \emph{mitigated} by randomized strategy switching.
\item[(5)] We confirm the efficiency of our algorithms experimentally on instances of considerable size.
\end{itemize}
As a byproduct of our effort, we obtain a strategy synthesis algorithm for fixed patrolling graphs outperforming the best existing algorithm by a margin.

Existing works on patrolling in dynamic environments are applicable to special graph topologies, non-adversarial environment, or concentrate on collaborative problems such as optimal reassigning the targets to agents (see Related Work). To the best of our knowledge, the presented results are the \emph{first attempt} to solve the problem of \emph{dynamic adaptation of moving strategies in adversarial changing environment with general topology}. We believe that the introduced concept of a security hole is of broader interest. The underlying observations may help to handle similar issues in a larger class of dynamic planning problems with recurrent time-bounded objectives, where the new strategy is obliged to satisfy the commitments not fully accomplished by the old strategy.

\subsection{Related Work}

Our paper fits the \emph{security games} line of work studying optimal allocation of limited security resources for achieving optimal target coverage \citep{Tambe:book}.
Practical applications of security games include the deployment of police checkpoints at the Los Angeles International Airport \citep{PJMOPTWPK:Deployed-ARMOR}, the scheduling of federal air marshals over the U.S.{} domestic airline flights \citep{TRKOT:IRIS}, the arrangement of city guards in Los Angeles Metro \citep{DJYZTKS:patrolling-uncertainty-JAIR}, the positioning of U.S.{} Coast Guard patrols to secure selected locations \citep{ASYTBDMM:Protect-AIMagazine}, and also applications to wildlife protection in Uganda~\citep{FKDYT:PAWS}. 

Most of the previous results about \emph{adversarial patrolling games} where the Defender is mobile, the environment is actively hostile, and the game horizon is infinite concentrate on computing an optimal moving strategy for certain graph topologies.
The underlying solution concept is the \emph{Stackelberg equilibrium} \citep{SFAKT:Stackelberg-Security-Games,YKKCT:Stackelberg-Nash-security}, where the Defender/Attacker play the role of the \mbox{Leader/Follower}. 

For general topologies, the existence of a perfect Defender's strategy discovering all attacks in time is $\PSPACE$-complete \citep{HO:UAV-problem-PSPACE}.
Consequently, computing an optimal Defender's strategy is $\PSPACE$-hard.
Moreover, computing an $\varepsilon$-optimal strategy for $\varepsilon \leq 1/2n$, where $n$ is the number of vertices, is $\NP$-hard \citep{KKR:patrol-drones}.
Hence, no feasible strategy synthesis algorithm can \emph{guarantee} (sub)optimality for all inputs, and finding high-quality strategy in reasonable time is challenging.
The existing methods are based on mathematical programming, reinforcement learning, or gradient descent.
The first approach suffers from scalability issues caused by non-linear constraints \citep{BGA:large-patrol-AI,BGA:patrolling-arbitrary-topologies}.
Reinforcement learning has so far been successful mainly for patrolling with finite horizon, such as green security games \citep{WSYWSJF:Patrolling-learning,BAVT:learn-preventive-healthcare,Xu:Green-security,KMZGA:moving_targets}. Gradient descent techniques for finite-memory strategies \citep{KL:patrol-regular,KKLR:patrol-gradient,KKMR:Regstar-UAI} are applicable to patrolling graphs of reasonable size. Strategy synthesis for restricted topologies has been studied for lines, circles  \citep{AKK:multi-robot-perimeter-adversarial,ASKK:perimeter-patrol}, or fully connected environments \citep{BKR:patrol-Internet}.

Dynamically changing environments have so far been considered mainly in the context of multi-agent patrolling where the task is to dynamically reassign the targets to agents \citep{Multi-agent-dynamic,Dynamic-Environment-Patrolling,Decentralized-Patrolling,JRM:dynamic-patrolling,dynamic-patrol-ring}.

\section{Background}

We recall the standard notions of a patrolling graph, Defender's and Attacker's strategies and their values.
Since our experiments also involve comparison with state-of-the-art strategy synthesis algorithm for fixed patrolling graphs \citep{KKMR:Regstar-UAI}, we adopt the same setup.

\paragraph{Patrolling graph}
A (static) \emph{patrolling graph} is a tuple $G = (V,T,E,\tm,d,\alpha)$ where
\begin{itemize}
	\item $V$ is a finite set of \emph{vertices} (Defender's positions);
	\item $T \subseteq V$ is a non-empty set of \emph{targets}; 
	\item $E \subseteq V \times V$ is a set of \emph{edges} (admissible moves);
	\item $\tm\colon E\to\Nset_+$ specifies the time to travel an edge;
	\item $d\colon T\to\Nset_+$ assigns the time to complete an attack;
	\item $\alpha\colon T\to\Rset_+$ defines the costs of targets.
\end{itemize}

We write \mbox{$u\to v$} instead of $(u,v)\in E$, and denote $\alpha_{\max}=\max_{\tau\in T} \alpha(\tau)$ and $\dm=\max_{t\in T} d(t)$. In the sequel, let $G$ be a fixed patrolling graph.

\paragraph{Defender's strategy}
\label{sec-Defender-strategy}

In general, the Defender may choose the next vertex randomly depending on the whole history of previously visited vertices. As observed by \citet{KKMR:Regstar-UAI}, a subclass of \emph{regular} Defender's strategies achieves the same limit protection as general strategies, and it is more convenient for algorithmic synthesis. 

In the area of graph games, regular strategies are also known as \emph{finite-memory strategies with stochastic memory update}. Intuitively, such a strategy is represented by a finite-state probabilistic automaton $\calA$ that ``reads'' the sequence of vertices visited so far. When a new vertex~$v$ is read, $\calA$ changes its current state $m$ into another state $m'$ chosen randomly according to a fixed probability distribution determined by $m$ and $v$. The decision taken by the strategy then depends only on the vertex currently visited and the current state of $\calA$. Hence, the set of states of $\calA$, denoted by $\Mem$, can be seen as a finite memory where some information about the history of visited vertices is stored (we also refer to the states of $\calA$ as \emph{memory elements}). 

Formally, let $\Mem$ be a finite set. The corresponding set of \emph{augmented vertices} $\dhat V$ is defined as $V \times \Mem$, and we use $\hat{v}$ to denote an augmented vertex of the form $(v,m)$. An \emph{augmented edge} is a pair $\hat{e} \equiv (\hat{v},\hat{u})$ of augmented vertices where $e \equiv (v,u) \in E$. The set of all augmented edges is denoted by $\dhat{E}$.

A \emph{regular Defender's strategy} for $G$ is a function $\sigma$ assigning to every $\hat{v} \in \dhat{V}$ a probability distribution over $\dhat{V}$ so that  $\sigma(\hat{v})(\hat{u})>0$ only if $v\to u$. Intuitively, the Defender starts in some $v \in V$ where the state of $\calA$ is initialized to some $m \in \Mem$, and then it randomly selects the next vertex and the next memory element according to $\sigma$. Thus, $\sigma$~encodes both the selection of the next vertex and the choice of the next state performed by $\calA$ (there is no need to specify the transitions of $\calA$ explicitly). 

Let us fix an initial augmented vertex $\hat{v}$. For every finite sequence $h = \hat{v}_1,\ldots,\hat{v}_n$, we use $\Prob(h)$ to denote the probability of executing $h$ under $\sigma$ when the Defender starts patrolling in $\hat{v}$. That is, \mbox{$\Prob(h) = 0$} if $\hat{v} \neq \hat{v}_1$, otherwise $\Prob(h) = \prod_{i=1}^{n-1} \sigma(\hat{v}_i)(\hat{v}_{i+1})$. Whenever we write $\Prob(h)$, the associated $\sigma$ and $\hat{v}$ are clearly determined by the context.

\paragraph{Attacker's strategy}

In the patrolling graph, the time is spent by traversing edges. Adversarial patrolling assumes a powerful  Attacker capable of determining the next edge taken by the Defender immediately after its departure from the vertex currently visited. For the Attacker, this is an optimal moment to attack because delaying the attack gains no advantage (as we shall see, this is no longer true in a \emph{changing} environment). Furthermore, the 
Attacker can attack \emph{at most once} during a play.

An \emph{observation} is a sequence $o =
v_1,\ldots, v_n, v_n {\rightarrow} v_{n+1}$, where $v_1,\ldots, v_n$ is a path in~$G$. Intuitively, $v_1,\ldots, v_n$ is the sequence of vertices visited by the Defender, $v_n$ is the currently visited vertex, and $v_n {\rightarrow} v_{n+1}$ is the edge taken next. 
The set of all observations is denoted by $\Obs$. 
An \emph{Attacker's strategy} is a function
$\pi\colon \Obs \rightarrow \{\wait,\attack_\tau:\tau\in T\}$. As usual, we require that
if $\pi(v_1,\ldots, v_n,v_n {\rightarrow} u) = \attack_\tau$ for some $\tau \in T$,
then $\pi(v_1,\ldots, v_i,v_i {\rightarrow} v_{i+1}) = \wait$ for all $1\leq
i<n$. Intuitively, this ensures that the Attacker can attack at most once (this assumption is standard; see, e.g., \citep{KKLR:patrol-gradient,KKMR:Regstar-UAI} for a more detailed explanation).

\paragraph{Evaluating Defender's strategy}

\newcommand\Eval{\operatorname{Eval}}

Let $\sigma$ be a regular Defender's strategy and $\pi$ an Attacker's strategy. 

Let us fix an initial augmented vertex $\hat{v}$ where the Defender starts patrolling. The \emph{expected Attacker's utility} for $\sigma$, $\pi$ and $\hat{v}$ is defined as
\begin{equation*}
	\EU^{\sigma,\pi}(\hat{v})
	\ = \  \sum_{\tau,\hat{e}}  \mathbf{P}^{\sigma,\pi}(\hat{e},\tau) \cdot \Steal^\sigma(\hat{e},\tau)
\end{equation*}
where $\mathbf{P}^{\sigma,\pi}(\hat{e},\tau)$ is the probability of initiating an attack at $\tau$ when the  Defender starts moving along~$\hat{e}$, and $\Steal^\sigma(\hat{e},\tau)$ is the expected cost ``stolen'' by this attack. 

More precisely, let $\Attack(\pi,\hat{e},\tau)$ be the set of all $(\hat{v}_1,\ldots,\hat{v}_{n+1})$ such that $\pi(v_1,\ldots,v_n,v_n {\to} v_{n+1}) = \tau$ and $\hat{e} = \hat{v}_n \to \hat{v}_{n+1}$. We put
\begin{equation*}
   \mathbf{P}^{\sigma,\pi}(\hat{e},\tau) = \sum_{h \in \Attack(\pi,\hat{e},\tau)} \Prob(h).
\end{equation*}
Furthermore, let $\mathbf{M}^{\sigma}(\hat{e},\tau)$ be the probability of missing (i.e., not visiting) an augmented vertex of the form $\hat{\tau}$ in the first $d(\tau) - \tm(e)$ time units by a Defender's walk initiated in $\hat{u}$, where $\hat{u}$ is the destination of $\hat{e}$. We define 
$\Steal^\sigma(\hat{e},\tau) = \alpha(\tau) \cdot \mathbf{M}^{\sigma}(\hat{e},\tau)$.

Intuitively, $\EU^{\sigma,\pi}(\hat{v})$ is the expected amount ``stolen'' by the Attacker.
The Defender and Attacker aim to minimize and maximize $\EU^{\sigma,\pi}(\hat{v})$, respectively. The \emph{Attacker's value of $\sigma$ in $\hat{v}$} is the expected Attacker's utility achievable when the Defender commits to~$\sigma$ and starts patrolling in $\hat{v}$, i.e., 
$\AVal_G(\sigma)(\hat{v})  =  \sup_{\pi} \ \EU^{\sigma,\pi}(\hat{v})$.
The Defender can choose the initial $\hat{v}$, and hence we also define the \emph{Attacker's value of $\sigma$} as 
\begin{equation*} \label{E:steal-def}
	\AVal_G(\sigma) =  \min_{\hat{v}} \AVal_G(\sigma)(\hat{v}).
\end{equation*}
The \emph{Defender's value} (or simply the \emph{value}) is defined by
\begin{align*}
   \Val_G(\sigma)(\hat{v}) & = \alpha_{\max} - \AVal_G(\sigma)(\hat{v})\\
   \Val_G(\sigma)      & = \alpha_{\max} - \AVal_G(\sigma).
\end{align*}
Intuitively, $\Val_G(\sigma)$ corresponds to the protection guaranteed by $\sigma$ against an arbitrary Attacker's strategy.
We omit the `$G$' subscript if it is clear from the context.

\section{Changing Environment}

In this section, we introduce a formal model of changing environments, formalize the concept of strategy switching, and show how to evaluate a switching strategy in a changing environment. 

We consider two types of environmental changes: \emph{topological changes} influencing the admissible Defender's moves, \ie, inserting/deleting edges or modifying edge traversal time, and \emph{utility changes} modifying the targets costs.

Formally, a \emph{changing environment} is a pair $G_1 \mapsto G_2$ where $G_1$ and $G_2$ are patrolling graphs with the same set of vertices $V$, the same set of targets $T$, and the same $d$ specifying the attack times. We write $E_i$, $\tm_i$, and $\alpha_i$ to denote the edges, traversal times, and target costs of $G_i$ for $i \in \{1,2\}$.

Note that our definition does not allow changing the vertex set or the target set, yet
these changes can be easily modeled. For instance, adding a vertex may be modeled such that the vertex is present in both $G_1$ and $G_2$ but has no incoming edges in $G_1$ ($\sigma_1$ will be extended with an arbitrary behavior at the vertex). Similarly, removing a target may be modeled by changing its cost to a negligibly small value.

For the rest of this section, we fix a changing environment $G_1 \mapsto G_2$, and a pair of regular Defender's strategies $\sigma_1$ and $\sigma_2$ for $G_1$ and $G_2$, respectively. We assume that $\sigma_1$ and $\sigma_2$ use the same set $\Mem$ of memory elements.

\paragraph{Strategy switching}
 
Let $t \in \Nset$ be a \emph{switching time}. We use $G_1 \mapsto_t G_2$ to denote the scenario where the patrolling graph $G_1$ changes into the patrolling graph $G_2$ at time~$t$, and $\sigma_1 \mapsto_t \sigma_2$ to denote the Defender's strategy for \mbox{$G_1 \mapsto_t G_2$} obtained by ``switching'' from $\sigma_1$ into $\sigma_2$ at time~$t$, defined as follows. 

The Defender keeps executing $\sigma_1$ in all augmented vertices visited strictly before time~$t$. Let $(v,m)$ be the first augmented vertex visited by the Defender at or after time $t$ (observe that the $m$ is still determined by $\sigma_1$). From now on, the Defender should play according to $\sigma_2$. We distinguish three possibilities.

\begin{itemize}
\item[(a)] There is $m' \in \Mem$ such that $\Val_{G_2}(\sigma_2) = \Val_{G_2}(\sigma_2)(v,m')$. 
Then, the Defender selects such an $m'$ and starts applying $\sigma_2$ from $(v,m')$.
\item[(b)] The condition of (a) does not hold, but there exist $(v',m')$ and a path from $v$ to $v'$ in $G_2$ such that $\Val_{G_2}(\sigma_2) = \Val_{G_2}(\sigma_2)(v',m')$. Then, the strategy $\sigma_1 \mapsto_t \sigma_2$ follows the selected path from $v$ to~$v'$, and then starts applying $\sigma_2$ from $(v',m')$ for the selected $m'$.   
\item[(c)] None of the conditions (a) and (b) holds. Then, it is \emph{impossible} to perform a switch from $\sigma_1$ to $\sigma_2$ preserving the protection value of $\sigma_2$, and the strategy $\sigma_1 \mapsto_t \sigma_2$ is undefined.
\end{itemize}
In all scenarios considered in our experiments, Condition~(a) holds for every~$t$. Condition~(b) corresponds to a situation when some  vertex $v$ visited by $\sigma_1$ is no longer visited by $\sigma_2$. Condition~(c) covers pathological cases when a ``drastic'' environmental change prevents switching $\sigma_1$ into $\sigma_2$ (e.g., all edges disappear). From now on, we assume that Condition~(a) or~(b) holds and the strategy $\sigma_1 \mapsto_t \sigma_2$ is defined.

\begin{remark}
\label{rem-compatibility}
Our algorithm for constructing $\sigma_2$ (see Preventing and Mitigating Security Holes) ``adapts'' $\sigma_1$ to the new environment $G_2$. Hence, the elements of $\Mem$ may represent similar information about the history of visited vertices in $\sigma_1$ and $\sigma_2$, and Condition~(a) may hold even for $m'=m$. In this case, the information encoded by $m$ is passed on to $\sigma_2$ during the switch, decreasing the danger of creating large security holes.  
\end{remark}

\paragraph{Evaluating a switching strategy} 

The notions defined for static environments (Attacker's strategy, expected utility, strategy value, etc.) also apply to changing environments, and the technical adjustments are trivial. However, the notion of Attacker's observation requires revision for the reasons described below.

In static scenarios, it is safe to assume the Attacker initiates his attack when the Defender leaves a vertex (see the paragraph \emph{Attacker's strategy} in the previous section). However, in $G_1 \mapsto_t G_2$, the Attacker \emph{may} increase its expected utility by initiating an attack in the middle of a Defender's move. This is because a short delay may suffice for completing the attack \emph{after} time $t$ when the target becomes more valuable, but postponing the attack to the moment when the Defender completes the move would already increase the probability of discovering the attack too much. In case of deeper interest, see a concrete example in Appendix~\ref{app-changing}.

Technically, we define an \emph{observation} in  $G_1 \mapsto_t G_2$ as a pair $(o,\delta)$, where $o = v_1,\ldots, v_n,v_n {\rightarrow} v_{n+1}$ is defined as for static environments and $\delta \in \Nset$ is a \emph{delay} strictly smaller than $\tm_i(v_n {\rightarrow} v_{n+1})$, where $i=1$ if the move $v_n {\rightarrow} v_{n+1}$ is initiated before time~$t$, and $i=2$ otherwise. 

The expected Attacker's utility $\EU^{\sigma_1 \mapsto_t \sigma_2,\pi}(\hat{v})$ is defined similarly as for static environments, i.e., as a sum
\begin{equation}
\label{eq-EAU-changing}
   	 \sum_{\tau,\hat{e},\delta,t_0}  \mathbf{P}^{\sigma_1 \mapsto_t \sigma_2,\pi}(\hat{e},\tau,\delta,t_0) \cdot \Steal^{\sigma_1 \mapsto_t \sigma_2}(\hat{e},\tau,\delta,t_0).
\end{equation}
Here, $t_0 \in \Nset$ denotes the attack time. The symbol $\mathbf{P}^{\sigma_1 \mapsto_t \sigma_2,\pi}(\hat{e},\tau,\delta,t_0)$ is the probability of initiating an attack at $\tau$ at time $t_0$ when the Defender has been going along~$\hat{e}$ for $\delta$ time units (note that this also depends on the Defender's initial position $\dhat{v}$). $\Steal^{\sigma_1 \mapsto_t \sigma_2}(\hat{e},\tau,\delta,t_0)$ denotes the expected cost ``stolen'' by this attack. Detailed technical definitions are in Appendix~\ref{app-changing}. Although the delay $\delta$ further complicates our technical definitions, it describes a \emph{real phenomenon} which must be properly reflected by a realistic formal model.

The time $t$ when $G_1$ changes into $G_2$ is unpredictable, and the strategy $\sigma_2$ must guarantee a reasonable protection on $G_2$ for all $t$'s. Hence, the \emph{Attacker's value of $\sigma_1 \mapsto \sigma_2$ in $G_1 \mapsto G_2$} is defined as
\begin{equation*}
    \AVal_{G_1 \mapsto G_2}(\sigma_1 {\mapsto} \sigma_2) =  \min_{\hat{v}}\ \sup_{\pi}\ \sup_{t}\
    \EU^{\sigma_1 \mapsto_t \sigma_2,\pi}(\hat{v})\,.
\end{equation*}

Note that the ``$\sup_\pi$'' in the above definition ensures that \emph{all} Attacker's strategies are taken into account, including those taking advantage of observing the environmental change, Defender's moves, and analyzing the functionality of $\sigma_1,\sigma_2$.

\section{Security Holes}

Strategy switching may result in temporarily decreasing the protection of some targets. Clearly, the Defender cannot protect $G_1 \mapsto G_2$ by  $\sigma_1 \mapsto \sigma_2$ better than it is protecting $G_1$ by $\sigma_1$ and $G_2$ by $\sigma_2$. In terms of Attacker's values,
\begin{eqnarray*}
   \AVal_{G_1 \mapsto G_2}(\sigma_1 \mapsto \sigma_2)  \ \geq \ \AVal_{G_1}(\sigma_1),\\
   \AVal_{G_1 \mapsto G_2}(\sigma_1 \mapsto \sigma_2)  \ \geq \ \AVal_{G_2}(\sigma_2).
\end{eqnarray*}
The first inequality is simple because the switching time~$t$ can be arbitrarily large. With increasing $t$, the Attacker can perform more and more of his attacks scheduled by a given strategy $\pi$ against $\sigma_1$ in $G_1$ also in $G_1 \mapsto G_2$, achieving the expected utility arbitrarily close to the expected utility received in $G_1$. The second inequality is also immediate because the Attacker can ``simulate'' an arbitrary strategy $\pi$ against $\sigma_2$ also in $G_1 \mapsto G_2$ by performing his attacks after the switching time.

However, it may also happen that $\AVal_{G_1 \mapsto G_2}(\sigma_1 {\mapsto} \sigma_2)$ is \emph{strictly larger} than the maximum of $\AVal_{G_1}(\sigma_1)$ and $\AVal_{G_2}(\sigma_2)$ due to the new attack  opportunities offered in the limited time window short before the switching time caused by the ``incompatibility'' between $\sigma_1$ and $\sigma_2$ (see Fig.~\ref{fig-sec-hole}). Note that although the Attacker cannot enforce an environmental change at a particular time, in our adversarial setting we consider the worst possibility, i.e., we assume the change happens in the least convenient moment. Formally, the \emph{security hole} of $\sigma_1 \mapsto \sigma_2$, denoted by $\hole_{G_1 \mapsto G_2}(\sigma_1,\sigma_2)$, is defined as
\begin{equation*}
  \AVal_{G_1 \mapsto G_2}(\sigma_1 \mapsto \sigma_2) - \max\{\AVal_{G_1}(\sigma_1), \AVal_{G_2}(\sigma_2)\}.
\end{equation*}
Intuitively, $\hole_{G_1 \mapsto G_2}(\sigma_1,\sigma_2)$ is the extra amount stolen by the Attacker due to the incompatibility between $\sigma_1$ and $\sigma_2$. Note that if $\hole_{G_1 \mapsto G_2}(\sigma_1,\sigma_2) = 0$, then the new attack opportunities caused by the switch are no more dangerous than the ones offered by $\sigma_1$ in $G_1$ and $\sigma_2$ in $G_2$.

\begin{example}
\label{exa-change}
Let $G_1$ and $G_2$ be the patrolling graphs of Fig.~\ref{fig-hole}.
Let $\sigma_1$ be a trivial strategy walking among $v_1,v_2,v_3$ \emph{clockwise}. Since every target is revisited within the next~$6$~time units, all attacks are discovered in time and hence $\AVal_{G_1}(\sigma_1) = 0$. When the environment changes into $G_2$ by removing the edge $v_2 \tran{} v_3$, the Defender's strategy is changed into $\sigma_2$ walking among $v_1,v_2,v_3$ \emph{anticlockwise}. Clearly, $\AVal_{G_2}(\sigma_2) = 0$, and hence both $\sigma_1$ and $\sigma_2$ achieve perfect protection in $G_1$ and $G_2$, resp.

Now consider the scenario where $v_3$ is attacked at time~$\ell$ when the Defender is in the middle of the move $v_3 \tran{} v_1$ in $G_1$. The Defender arrives in $v_1$ at time $\ell + 1$, and the environment changes from $G_1$ into $G_2$ at time $t = \ell+2$.  In $v_1$, the Defender still uses $\sigma_1$ to determine the next move, and arrives in $v_2$ at time $\ell + 3$. In $v_2$, the Defender already uses the new strategy $\sigma_2$, and therefore visits $v_3$ at time $\ell + 7$. That is, the attack at $v_3$ initiated at time $\ell$ (short before the switching time) \emph{succeeds} with probability one. Consequently, $\AVal_{G_1 \mapsto G_2}(\sigma_1 \mapsto \sigma_2) = 100$ and hence
$\hole_{G_1 \mapsto G_2}(\sigma_1,\sigma_2) = 100$. 
\end{example}

\begin{figure}[t]
\centering
\begin{tikzpicture}[x=1.7cm,y=1.7cm,xscale=0.9,yscale=0.5,every node/.style={scale=.8}]
			\node (v1) at (0,0)      [ran]  {$v_1$}; 
			\node (v2) at (-.57,-1)  [ran]  {$v_3$}; 
			\node (v3) at (0.57,-1)  [ran]  {$v_2$}; 
            \path[tran] (v1)  edge   [bend left=15] (v2);
            \path[tran] (v2)  edge   [bend left=15] (v1);
            \path[tran] (v2)  edge   [bend left=15] (v3);
            \path[tran] (v3)  edge   [bend left=15] (v2);
            \path[tran] (v1)  edge   [bend left=15] (v3);
            \path[tran] (v3)  edge   [bend left=15] (v1);
            \node at (0,-1.5) {$G_1$; $\sigma_1$ walks clockwise.};
\node (u1) at (3,0)      [ran]  {$v_1$}; 
			\node (u2) at (2.43,-1)  [ran]  {$v_3$}; 
			\node (u3) at (3.57,-1)  [ran]  {$v_2$}; 
            \path[tran] (u1)  edge   [bend left=15] (u2);
            \path[tran] (u2)  edge   [bend left=15] (u1);
\path[tran] (u2)  edge   [bend left=15] (u3);
            \path[tran] (u1)  edge   [bend left=15] (u3);
            \path[tran] (u3)  edge   [bend left=15] (u1);
            \node at (3,-1.5) {$G_2$; $\sigma_2$ walks anti-clockwise.};
\end{tikzpicture}
	\caption{Exemplifying $\hole_{G_1 \mapsto G_2}(\sigma_1,\sigma_2) = 100$. Target costs are $100$, traversing every edge takes $2$ time units, completing an attack takes $6$ time units.
	If the Attacker attacks $v_3$ when the Defender is in the middle of the edge $v_3\to v_1$ and the environment changes in $2$ more time units, the attack succeeds with probability $1$.
}
	\label{fig-hole}
\end{figure}
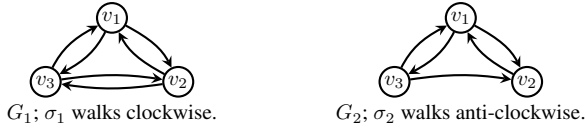

\subsection{Estimating security holes}
Now we present an algorithm for computing an upper bound on the security hole.
First, we reduce the estimation of the security hole to computation of certain steals. Second, we make some observations that allow us to consider as few of the steals as possible. Third, we present Algorithm~\ref{alg-search}, which manages to merge the computation
of several steals into one, by performing a search through the patrolling graph.

\paragraph{Reduction to steals}
Let $G_1,G_2$ be patrolling graphs and $\sigma_1,\sigma_2$ be
Defender's strategies in $G_1$ and $G_2$, respectively.
Recall that $\hole_{G_1 \mapsto G_2}(\sigma_1,\sigma_2)$ is defined as
\begin{equation*}
\AVal_{G_1 \mapsto G_2}(\sigma_1 \mapsto \sigma_2) - \max\{\AVal_{G_1}(\sigma_1), \AVal_{G_2}(\sigma_2)\}
\end{equation*}
Since $\AVal_{G_1}(\sigma_1)$ and $\AVal_{G_2}(\sigma_2)$ are computable by the standard strategy evaluation algorithm \cite[see, e.g.,][]{KKMR:Regstar-UAI}, we only need to compute an \emph{upper bound on $\AVal_{G_1 \mapsto G_2}(\sigma_1 {\mapsto} \sigma_2)$.} Recall that the expected protection achieved by $\sigma_1 \mapsto \sigma_2$ against a given attack is fully determined by the following:

\begin{itemize}
	\item $\dhat{e}$: the Defender's location when the attack is initiated;
	\item $\tau$: the attacked target;
	\item $\delta$: the time passed since the Defender entered $\dhat{e}$;
	\item $t_0$: the current time (when the attack is initiated);
	\item $t$: the switching time.
\end{itemize}

According to the definition of security hole, it suffices to compute the maximum of all the values $\Steal^{\sigma_1 \mapsto_t \sigma_2}(\hat{e},\tau,\delta,t_0)$. Since there are \emph{infinitely many} $t,t_0 \in \Nset$, this task is not trivial.

\paragraph{Minimizing the number of steals to consider}
If $t_0 < t {-} d(\tau)$ or  $t_0 \geq t$, then the attack at $\tau$ is fully covered by $\sigma_1$ or $\sigma_2$, respectively. Hence, the only interesting case is when \mbox{$1 \leq t - t_0 \leq d(\tau)$}. Although there are still infinitely many $t,t_0$ satisfying this condition, the above $\Steal$ is \emph{fully determined} just by the difference $t-t_0$, and can thus be written as $\Steal(\dhat{e},\tau,\delta,\Delta t)$, where $\Delta t$ denotes the difference and ranges over \emph{finitely many values} bounded by $d(\tau)$.

Another simple observation is that $\Steal(\dhat{e},\tau,\delta,\Delta t) \leq \Steal(\dhat{e},\tau,0,{\rm min}(\Delta t+\delta,d(\tau)))$.
Hence, from now on, we omit the $\delta$, implicitly assuming $\delta=0$.

Furthermore, for all $\dhat{e},\dhat{e}'$ such that $\tm_1(e)\leq \tm_1(e')$
and both $\dhat{e}$ and $\dhat{e}'$ lead to the \emph{same} augmented vertex, we have that
$\Steal(\dhat{e},\tau,\Delta t) \leq \Steal(\dhat{e}',\tau,\Delta t)$.
Therefore, it suffices to pick, for each augmented vertex $\dhat{v}$, one of the longest augmented edges leading to $\dhat{v}$, and disregard all other augmented edges when looking for the maximal $\Steal(\dhat{e},\tau,\Delta t)$.

Finally, for given $\dhat{e}$, we say that $\Delta t$
is an \emph{arrival time} if the Defender can reach some vertex in \emph{precisely}
$\Delta t$ time units after it starts moving along $\dhat{e}$. Note that if $\Delta t<d(\tau)$ is \emph{not} an arrival time for $\dhat{e}$, then $\Steal(\dhat{e},\tau,\Delta t) = \Steal(\dhat{e},\tau,\Delta t{+}1)$.
Therefore, we may safely disregard all $\Delta t<d(\tau)$ that are not arrival times for $\dhat{e}$, and compute the $\Steal$ either for the least $\Delta t'>\Delta t$ which is an arrival time, or for $\Delta t'=d(\tau)$. By incorporating this condition, we obtain a set of all \emph{eligible} $\Steal(\dhat{e},\tau,\Delta t)$.

\paragraph{Computing the steals}

For all eligible $\dhat{e}_0$ and $\tau$, we merge the computation of
$\Steal(\dhat{e}_0,\tau,\Delta t)$ for all eligible $\Delta t$ into one as described in Algorithm~\ref{alg-search}:
Let $\dhat{e}_0=((v_a,m_a),(v_b,m_b))$. If $v_b=\tau$, the answer is trivial:
the Defender either surely catches the attack (if $\tm_1(e_0) \leq d(\tau)$) or surely fails to catch it 
(otherwise).
Otherwise, the algorithm preforms a forward search through the patrolling graph. The search is guided by a min-heap $\calH$ of items $(v,m,t,p)$,
sorted by $t$, where each item corresponds to a certain set of paths from $(v_b,m_b)$ to $(v,m)$,
all of which have the same length (total traversal time)~$t$ and whose total probability is $p$. The first heap item is $(v_b,m_b,\tm_1(e_0),1)$,
corresponding to the Defender being  in the augmented vertex $(v_b,m_b)$ at time $\tm_1(e_0)$ with probability~$1$.

Now, we explain lines~\ref{line-steal-start}--\ref{line-steal-end}.
There, we compute $\Steal(\dhat{e}_0,\tau,\Delta t)$ for $\Delta t=\ltime$.
Note that the contents of $\calH$ fully describe the possible locations of the Defender at time~$\ltime$:
each item $h\in\calH$ corresponds to the Defender being on an edge leading to $(h.v,h.m)$ with probability $h.p$, arriving there at time $h.t$ (if $h.t=\ltime$, then the Defender is already in $(h.v,h.m)$), while
failing to have caught an ongoing attack at $\tau$ yet. (Then, $1-\sum_{h\in\calH} h.p$ is the probability
that the attack has already been caught.) Thus, it suffices, for each item $h\in\calH$,
to compute the probability $p_{catch}(h)$ of visiting $\tau$ from $(h.v,h.m)$ in $G_2$ within
$d(\tau) - h.t$ time units. Then, $\Steal(\dhat{e},\tau,\ell) = \alpha(\tau)\cdot(\sum_{h\in\calH}h.p\cdot(1-p_{catch}(h)))$.

\newcommand\pe{\;\texttt{+=}\;}
\newcommand\te{\;\texttt{*=}\;}
\newcommand\ee{\;\texttt{=}\;}
\SetAlFnt{\small}
\begin{algorithm}[h!]
	\SetAlgoLined
	\DontPrintSemicolon
	\SetKwInOut{Parameter}{parameter}\SetKwInOut{Input}{input}\SetKwInOut{Output}{output}
	\SetKwData{C}{C}\SetKwData{D}{D}\SetKwData{MX}{M}\SetKwData{f}{f}
	\SetKw{Or}{or}
	\SetKw{And}{and}
	\SetKw{Not}{not}
	\SetKwData{Array}{array of}
	\SetKwData{Rat}{Rat}
	\SetKwData{Der}{Der}
	\SetKwData{MinHeap}{min-heap of Paths}
	\SetKwData{Index}{indexed by}
	\SetKwProg{Macro}{macro}{:}{}
	\Input{Patrolling graphs $G_1,G_2$, regular strategies $\sigma_1,\sigma_2$, $\dhat{e}_0 \ee ((v_a,m_a),(v_b,m_b))\in\dhat{E}$, $\tau\in T$}
	\Output{Maximum of $\Steal(\dhat{e}_0,\tau,\Delta t)$ over all $\Delta t$}
	\BlankLine
\makebox[1em][l]{$\calV$} : \texttt{array} \mbox{ indexed by eligible pairs $\dhat V$} \;
	\makebox[1em][l]{$\calH$} : \texttt{min-heap} \mbox{ of tuples $(v,m,t,p)$ sorted by $t$}    \;
	\BlankLine
	\If{$v_b=\tau$}{\Return $\tm_1(e_0) \leq d(\tau)\ ?\ 0 : \alpha(\tau)$\;}
	$steal\ee 0$ \;
	$\calH.\mathit{insert}(v_b,m_b,\tm_1(e_0),1)$\;
\While{\Not $\calH.\mathit{empty}$}{\label{line-loop-H-start}
		
		$\ltime \ee \calH.\mathit{peek}.t$\;\label{line-set-l}
$prob = 0$ \;\label{line-steal-start}
		\ForEach{$h\in\calH$}{
			$prob \pe h.p*(1-Query\_p_{catch}(h, G_2,\sigma_2,\tau))$ \;\label{line-call-query}
		}
		$steal \ee max(steal, \alpha(\tau)*prob)$ \;\label{line-steal-end}
\Repeat{$\calH.\mathit{empty}$ \Or $\calH.\mathit{peek}.t > \ltime$}{\label{line-AB-start}
			$(v,m,t,p) \ee \calH.\mathit{pop}$ \;
			$\calV(v,m) \pe p$ \;
		}\label{line-AB-end}
\ForEach{$(v,m)$ such that $\calV(v,m) > 0$\label{line-vm-start}}{
			\ForEach{$\dhat{e} \ee ((v,m),(v',m'))\in\dhat E$}{
				$t \ee \ltime + \tm_1(e)$\;
				\uIf{$t \leq d(\tau)$ \And $v'\not=\tau$}{
					$\calH.\mathit{insert}(v',m',t,\calV(v,m)*\sigma_1(\dhat{e}))$\;\label{line-H-insert}
				}
			}
			$\calV(v,m) \ee 0$\;
		} \label{line-vm-end}		
	}
	\Return $steal$
	\caption{Computes max $\Steal(\dhat{e}_0,\tau,\Delta t)$ over all $\Delta t$ for a given $\dhat{e}_0$ and $\tau$}
	\label{alg-search}
\end{algorithm}

 \paragraph{Answering the queries for $p_{catch}(h)$}

At line~\ref{line-call-query}, Algorithm~\ref{alg-search} needs to know the probability $p_{catch}(h)$. Presumably, $p_{catch}(h)$ could be computed simply by performing another similar search from $(h.v,h.m)$ in $G_2$ (omitting lines~\ref{line-steal-start}--\ref{line-steal-end}).
However, this is rather slow. Instead, we initiate a backward search
from $\tau$ in $G_2$, and then we make further enhancements in order to
answer the queries for $p_{catch}(h)$ efficiently.
The details are presented in Appendix~\ref{app-holes}.

\subsection{Preventing security holes}
Our approach to preventing large security holes is based on taking the functionality of $\sigma_1$ into account when computing the strategy $\sigma_2$ for a given $G_1 {\mapsto} G_2$. This is achieved by \emph{adapting} the strategy $\sigma_1$ to $G_2$.
Since $\sigma_1$ may not be directly executable in $G_2$ (for example, some edges of $G_1$ used by $\sigma_1$ may disappear in $G_2$), we first perform some adjustments to $\sigma_1$. Then, we \emph{improve} this initial strategy in $G_2$ by an \emph{efficient strategy improvement algorithm} described below, and thus obtain $\sigma_2$. Intuitively, since $\sigma_2$ tends to be ``similar'' to $\sigma_1$, the chance of producing \emph{unnecessary} security holes decreases. This intuition is confirmed in Experiments.

The starting point for designing our strategy improvement algorithm is \Regstar, currently the best strategy synthesis algorithm for fixed patrolling graphs recently presented by \citet{KKMR:Regstar-UAI}.
\Regstar\ repeatedly picks a random initial strategy and tries to improve its value.
The algorithm consists of two subroutines: \emph{Evaluation}, \ie, computing of the value and a gradient of a given strategy and \emph{Optimization} using gradient descent. 
After hundreds of trials, the best strategy found is chosen. However, the percentage of trials converging to the best strategy found can be rather low ($\approx 2\%$) \citep[see][Sec.~3.5]{KKMR:Regstar-UAI}.
For this reason, our initial attempt to construct $\sigma_2$ by applying the strategy-improvement subroutine of \Regstar to $\sigma_1$ in the graph $G_2$ \emph{failed}.
This calls for \Regstar re-design.

First, we replace the optimization scheme using dedicated tools for differentiable programming (PyTorch with Adam optimizer).
We also add decaying Gaussian noise to the gradient allowing for different outcomes when optimizing from $\sigma_1$ and hence enlarging the chance of hitting high-valued $\sigma_2$. In contrast, the optimization loop of \Regstar\ is purely deterministic. Furthermore, the \Regstar's evaluation computes gradients in forward mode. We re-design this part by employing the reverse mode, yielding improvement by a factor of $|\dhat E|$.

Our modifications drastically improve \Regstar's convergence ratio and speed (see the analysis in Experiments) and allow for \emph{on the fly strategy adaptation}.
Implementation details are described and the code is provided in
\href{https://gitlab.fi.muni.cz/formela/2022-UAI-changing-env}{https://gitlab.fi.muni.cz/formela/2022-UAI-changing-env}.

\subsection{Mitigating security holes}

In general, the structural  difference between $G_1$ and $G_2$ in a changing environment $G_1 {\mapsto} G_2$ can make the creation of security holes \emph{unavoidable}, as demonstrated by the following example. 

\begin{example}
\label{exa-unavoidable}
Consider the setup of Example~\ref{exa-change}. There is only \emph{one} $\sigma_2$ such that $\Val_{G_2}(\sigma_2) = 100$ (the ``anticlockwise walk''), and hence there is no reasonable alternative to $\sigma_2$. Since $\Val_{G_1}(\sigma_1) = \Val_{G_2}(\sigma_2) = 100$, we \emph{inevitably} obtain the largest conceivable security hole equal to~$100$.
\end{example}

However, we show that under the conditions given below, the security holes can be \emph{mitigated without harming the protection achieved by~$\sigma_2$} by \emph{randomized strategy switching}. 
Let us assume the following:
\begin{itemize}
\item For every $(v,m)$ visited by $\sigma_1$ with positive probability, there is $(v,m')$ such that $\Val_{G_2}(\sigma_2)(v,m') =  \Val_{G_2}(\sigma_2)$.
\item $\sigma_1$ is executable in $G_2$ and\footnote{This is \emph{not} a typo. If $G_2$ has the same topology as $G_1$ but edge traversal times change, then setting $\sigma_2 = \sigma_1$ \emph{may} cause a security hole. The assumption  $\hole_{G_1 \mapsto G_2}(\sigma_1,\sigma_1)=0$ says that this does not happen.} $\hole_{G_1 \mapsto G_2}(\sigma_1,\sigma_1)=0$.
\end{itemize}
Under these conditions, the Defender may perform a \emph{randomized switch from $\sigma_1$ to $\sigma_2$}. That is, the Defender flips a $\kappa$-biased coin when it arrives in a vertex and switches to $\sigma_2$ only with probability $\kappa$. With the remaining probability $1-\kappa$, the Defender continues executing $\sigma_1$ and flipping the coin in the next vertex again. This goes on until the switch to $\sigma_2$ is performed. We have the following:

\begin{restatable}[]{theorem}{thmhole}
\label{thm-hole}
The expected number of time units needed to perform the $\kappa$-randomized switch is bounded by $\textit{max-time$_2$}/\kappa$, where \textit{max-time$_2$} is the maximal traversal time of an edge in $G_2$. 

The security hole caused by the switch is bounded by
\[
   \varrho + \left(1 - (1-\kappa)^{\dm}\right) \cdot \alpha_{\max}(G_2)
\]
where $\alpha_{\max}(G_2)$ is the maximal target cost in $G_2$ and $\varrho$ is defined as
\[
  \max\big\{0, \ \AVal_{G_2}(\sigma_1) {-} \max\{\AVal_{G_1}(\sigma_1), \AVal_{G_2}(\sigma_2)\} \big\}
\]
Hence, the security hole can be pushed arbitrarily close to~$\varrho$ by choosing a suitably small~$\kappa >0$. 
\end{restatable}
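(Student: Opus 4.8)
The plan is to handle the two assertions separately: the bound on the expected switching time is a routine stopping-time computation, while the bound on the security hole is the crux.

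\emph{Expected switching time.} The randomized switch terminates at the first vertex arrival after the change time~$t$ at which the $\kappa$-biased coin comes up heads, so the number of arrivals until the switch is geometric with mean $1/\kappa$. Consecutive arrivals are separated by traversing one $G_2$-edge, costing at most \textit{max-time$_2$} time units. Writing $T$ for the expected time-to-switch and conditioning on the outcome of the first flip gives $T \le (1-\kappa)(\textit{max-time$_2$} + T)$, whence $T \le (1-\kappa)\,\textit{max-time$_2$}/\kappa \le \textit{max-time$_2$}/\kappa$.

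\emph{Security hole: reduction.} Let $\rho$ denote the strategy $\sigma_1 \mapsto \sigma_2$ executed with the $\kappa$-randomized switch. Since $\varrho = \max\{0,\ \AVal_{G_2}(\sigma_1) - \max\{\AVal_{G_1}(\sigma_1),\AVal_{G_2}(\sigma_2)\}\}$, adding $\max\{\AVal_{G_1}(\sigma_1),\AVal_{G_2}(\sigma_2)\}$ to the claimed hole bound and using the algebraic identity $\max\{A,B,C\}-\max\{A,C\}=\max\{0,B-\max\{A,C\}\}$ (with $A=\AVal_{G_1}(\sigma_1)$, $B=\AVal_{G_2}(\sigma_1)$, $C=\AVal_{G_2}(\sigma_2)$) shows the theorem is equivalent to the single inequality
\[
  \AVal_{G_1 \mapsto G_2}(\rho) \ \le\ \max\{\AVal_{G_1}(\sigma_1),\AVal_{G_2}(\sigma_1),\AVal_{G_2}(\sigma_2)\} \ +\ \bigl(1-(1-\kappa)^{\dm}\bigr)\,\alpha_{\max}(G_2).
\]
Because the Attacker attacks at most once, the mutually exclusive attack events have total probability at most one, so for every Attacker strategy $\pi$, change time $t$ and start $\hat{v}$ we get $\EU^{\rho,\pi}(\hat{v}) = \sum_a \mathbf{P}(a)\,\Steal^{\rho}(a) \le \max_a \Steal^{\rho}(a)$, the maximum ranging over all single attacks $a$ (an observation at which $\pi$ attacks, fixing $\tau$, the augmented edge $\hat{e}$ and the attack time $t_0$). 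It therefore suffices to bound the conditional expected steal of one arbitrary fixed attack.

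\emph{Security hole: per-attack decomposition.} Fix an attack on $\tau$ with completion window $W=[t_0,t_0+d(\tau))$; only the cases $1\le t-t_0\le d(\tau)$ and $t_0\ge t$ are nontrivial. The conditional steal is an expectation over the random switch time $t_J\ge t$, and I split on the position of $t_J$ relative to~$W$. \textbf{(i)} If $t_J$ falls after $W$, the Defender runs $\sigma_1$ throughout $W$, so the catching probability equals that of the deterministic $\sigma_1\mapsto_t\sigma_1$, whose steal is bounded by $\AVal_{G_1\mapsto G_2}(\sigma_1\mapsto\sigma_1)=\max\{\AVal_{G_1}(\sigma_1),\AVal_{G_2}(\sigma_1)\}$; here the hypothesis $\hole_{G_1\mapsto G_2}(\sigma_1,\sigma_1)=0$ is essential, and this case also absorbs attacks launched in $[t,t_J)$. \textbf{(ii)} If $t_J$ falls before $W$, the Defender runs $\sigma_2$ throughout $W$ from the memory element chosen at the switch, which by the first assumption is value-optimal, so the steal is at most $\AVal_{G_2}(\sigma_2)$. \textbf{(iii)} If $t_J$ lies inside $W$, bound the steal trivially by $\alpha_{\max}(G_2)$. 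Event (iii) requires the first successful flip to land at an arrival inside $W$; as edges take at least one time unit there are at most $d(\tau)\le\dm$ such arrivals, and by memorylessness of the coin its conditional probability is at most $1-(1-\kappa)^{\dm}$ regardless of the past. Taking the total expectation over (i)--(iii) and using $\Pr[\text{i}]+\Pr[\text{ii}]\le 1$ yields exactly the displayed right-hand side; maximizing over attacks, then over $\pi,t$, minimizing over $\hat{v}$, and substituting into the definition of $\hole$ gives the theorem, and letting $\kappa\to 0^{+}$ gives the final sentence.

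\emph{Main obstacle.} The delicate part is making steps (ii) and the probability estimate in (iii) rigorous under conditioning on a \emph{past-dependent} Attacker decision. One must argue that conditioning on the event ``$\pi$ attacks here'' cannot inflate the conditional probability of switching inside $W$ beyond $1-(1-\kappa)^{\dm}$ --- this relies on the coin being independent of, and memoryless along, the $\sigma_1$-walk --- and that in case (ii) the per-attack steal of the $\sigma_2$-continuation is genuinely dominated by $\AVal_{G_2}(\sigma_2)$, which needs the first assumption (switching into a value-optimal $(v,m')$) together with the characterization of $\AVal$ as the supremum of reachable steals. The remaining bookkeeping --- separating attacks that straddle $t$ from attacks launched after $t$ but before the actual switch --- is precisely where $\hole_{G_1\mapsto G_2}(\sigma_1,\sigma_1)=0$ carries the argument.
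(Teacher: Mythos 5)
Your proposal is correct and follows essentially the same route as the paper's proof: a geometric-distribution argument with at most $\textit{max-time}_2$ per step for the switching time, and for the hole the same trichotomy ($\sigma_1$ throughout the attack window, bounded via $\hole_{G_1\mapsto G_2}(\sigma_1,\sigma_1)=0$; $\sigma_2$ throughout, bounded via the value-optimal memory assumption; switch inside the window, bounded trivially by $\alpha_{\max}(G_2)$ with probability at most $1-(1-\kappa)^{\dm}$ since at most $\dm$ coin flips occur during the attack), finishing with $p_1+p_3\le 1$, the convex-combination bound on $\EU$, and the same $\max\{a,b,c\}-\max\{a,c\}$ identity. Your explicit flagging of the conditioning-on-the-attack subtlety in case (iii) is, if anything, slightly more careful than the paper, which asserts the flip-count bound directly.
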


\noindent
A proof of Theorem~\ref{thm-hole} is in Appendix~\ref{app-holes}. The usefulness of randomized strategy switching is documented on a concrete scenario in Section~\ref{sec-experiments}.

\section{Experiments}
\label{sec-experiments}

\subsection{Strategy improvement analysis} \label{algo-analysis}

We assess our strategy synthesis algorithm in comparison with \Regstar
on the set of patrolling graphs used to evaluate \Regstar by \citet{KKMR:Regstar-UAI}. These graphs model office buildings, and their structure is recalled in Appendix~\ref{app-experiments}.

Here we present the outcomes for a graph modeling a \mbox{2-floor} building achieved for $\Mem$ with $1,\ldots,8$ elements, cf.{} \citep[Experiment~5.3]{KKMR:Regstar-UAI}. Fig.~\ref{fig:exp1_2floor_building} shows boxplot statistics of values of strategies found by 200 trials of \Regstar (blue) and our improved (red) method. Note that our method consistently produces values concentrated around the best value found, i.e., the chance of producing a strategy with a high value from a random initial strategy is \emph{high}. Furthermore, the value of the \emph{best} strategy found by our method is \emph{higher} than the one found by \Regstar{} in all cases except for $|\Mem| = 8$ (where the difference is negligible). Similar results are obtained for \emph{all} patrolling graphs analyzed by \citet{KKMR:Regstar-UAI}. These datasets together with a detailed setup description are in Appendix~\ref{app-experiments}.

\begin{figure}[t]
	\centering
	\includegraphics[width=\columnwidth]{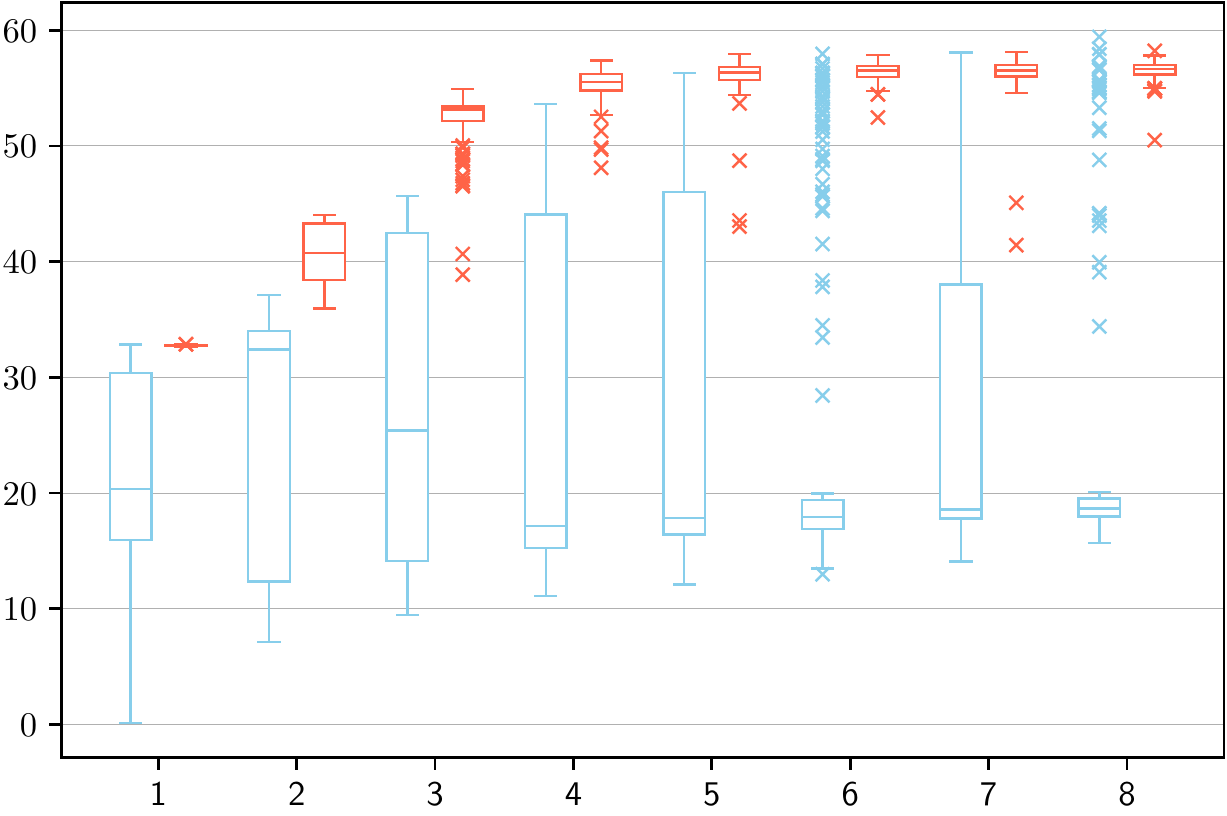}
	\caption{Values of strategies synthesized by \Regstar (blue) and our method (red) for a 2-floor building graph where $\Mem$ has $1,\ldots,8$ elements. 
	The red values are tightly distributed close to the maxima.}
	\label{fig:exp1_2floor_building}
\end{figure}

Next, we report runtimes of the forward (value) and backward (gradient) computations of the strategy-evaluation module.
Tab.~\ref{tab:building-two-floors-times} summarizes the mean of 200 passes through the strategy evaluation on the same 2-floor building graph with various $\Mem$ sizes.
Reverse-mode gradient computation improved the backward times by \emph{three orders of magnitude} (note that the time is given in \emph{seconds} for \Regstar and in \emph{miliseconds} for our method). 

\begin{table}[t]
	\small
	\centering
\begin{tabular*}{\linewidth}{@{}l@{\extracolsep{\fill}}d{3,1}d{3,2}d{4,3}d{2,1}}
		\toprule
		& \multicolumn{2}{c}{ forward  [ms]} & \multicolumn{2}{c}{ backward } 
		\\
		$m$ & \multicolumn{1}{c}{\Regstar} & \multicolumn{1}{c}{Ours} & \multicolumn{1}{c}{\Regstar[s]} & \multicolumn{1}{c}{Ours [ms]}
		\\ \midrule
2	& 50 + 3  & 48 + 5	    & 0.64 + 0.04   & 4 + 0	   \\
4	& 217 + 4 & 198 + 19	& 10.2 + 0.7    & 14 + 2		\\
6	& 492 + 4 & 451 + 44	& 57.4 + 3.9    & 33 + 3		\\
8	& 913 + 9 & 805 + 80	& 186.3 + 12.0  & 60 + 7		\\
\bottomrule
	\end{tabular*}
\caption{Effect of differentiation of $\sigma\mapsto\Val(\sigma)$ in reverse mode (Ours) compared to forward mode (\Regstar).}
	\label{tab:building-two-floors-times}
\end{table}

\subsection{Changing environment}

We evaluate our algorithms for concrete changing environments.
Specifically, we quantify the impact of our approach to preventing security holes and examine the effectiveness of randomized strategy switching on mitigating security holes.
 
We fix one patrolling graph $G_1$ consisting of $15$ locations in the downtown of Vancouver. The target costs are set between $80$ and $100$ at random. Furthermore,
we select $72$ edges connecting the targets with lengths measured in taxicab distance in hundreds of meters. Attack times are fixed to $64$, giving the Defender chance to discover an attack starting $6.4$km far away. For~$G_1$, we find and fix a strategy $\sigma_{1}$ with $\Val_{G_1}(\sigma_1)=42.1$.

We perform three sets of experiments, modifying $G_1$ to $G_2$ by either changing the target costs, edge lengths, or removing some edges.
The experiments are parameterized by the \emph{change size}, denoted by $\changesize$.
For all types we report two $\changesize$ values representing small and large change impact.
More values are reported in Appendix~\ref{app-experiments}.

\emph{Utility changes}\quad
The cost of each node is increased by its $\changesize\%$ with probability 1/3, decreased by $\changesize\%$ with probability 1/3, or left unchanged.
Note that utility changes can modify $\alpha_{\max}$ and thus influence $\Val$.
To compare, we normalize all results by $100/\alpha_{\max}$ for each $G_2$ and its~$\alpha_{\max}$.

\emph{Variable edge length}\quad
As in the previous case, the length of each edge is increased/decreased by $\changesize\%$ or kept unchanged (with the same probability).

\emph{Removed edges}\quad
We randomly delete $\changesize$ edges so that $G_2$ remains strongly connected.

For each $\changesize$, we generate $10$ modified graphs~$G_2$.
For every $G_2$, we take the highest value and security gap from $10$ optimization trials with $0$, $50$, $100$, $200$, and $400$ optimization steps initiated in $\sigma_{1}$ (recall that the optimization step in our algorithm uses noising and hence the output is different for each of the $10$ trials).
We report the means and standard deviations over all $10$ modified graphs $G_2$.
The same statistics are reported for the runs that start from random initialization instead of from~$\sigma_1$.
This is repeated for every $\changesize$. Hence, for each line of Tab.~\ref{tab:experiments}, we run $2\times 100$ optimization trials.
For setup details, see Appendix~\ref{app-experiments}.

\let\hc\relax
\definecolor{skyblue}{HTML}{97CCE8}
\definecolor{ourorange}{HTML}{ED6D52}
\definecolor{ourgreen}{HTML}{73FA79}
\newcommand\hcglob[3]{\cellcolor{skyblue!\fpeval{((#1-#2)/(#3-#2))**2*100}!ourorange}#1}
\newcommand\hcglobg[3]{\cellcolor{ourorange!\fpeval{((#1-#2)/(#3-#2))*100}!ourgreen}#1}
\newcommand\tabHead{\multirow{2}{*}{$\changesize$}& \multirow{2}{*}{steps}& \multicolumn{2}{c}{$\Val$}& \multicolumn{2}{c}{Security Hole} \\
	&& & \multicolumn{1}{c}{from $\sigma_{1}$}& \multicolumn{1}{c}{from rnd}& \multicolumn{1}{c}{from $\sigma_{1}$}& \multicolumn{1}{c}{from rnd} \\
}

\begin{table}[t]
	\small
	\centering
		\begin{adjustbox}{max width=\linewidth}
		\begin{tabular}{@{}lr@{\ }@{\extracolsep{\fill}}rd{3,3}d{3,2}d{3,2}d{3,2}}
		\toprule
			\multirow{2}{*}{}&\tabHead
		\midrule
			\multirow{10}{*}{\rotatebox{90}{utility changes}}
			& \multirow{5}{*}{5}
			\gdef\hc#1{\hcglob{#1}{12.7}{43.8}}
			\gdef\hcg#1{\hcglobg{#1}{0}{33.7}}
			 & 0		& \hc{40.9}	+ 0.9	& \hc{12.7}	+ 3.2	& \hcg{0.0}	+ 0.0	& \hcg{ 4.6}	+ 2.7	\\
			&& 50	& \hc{43.4}	+ 0.6	& \hc{27.4}	+ 0.8	& \hcg{2.8}	+ 1.2	& \hcg{14.4}	+ 1.2	\\
			&& 100	& \hc{43.6}	+ 0.5	& \hc{37.3}	+ 1.2	& \hcg{3.9}	+ 1.9	& \hcg{25.3}	+ 2.6	\\
			&& 200	& \hc{43.8}	+ 0.6	& \hc{41.3}	+ 0.6	& \hcg{5.2}	+ 2.8	& \hcg{30.2}	+ 4.4	\\
			&& 400	& \hc{43.8}	+ 0.6	& \hc{42.7}	+ 0.5	& \hcg{6.9}	+ 4.4	& \hcg{33.7}	+ 3.9	\\
			\cmidrule{2-7}
			& \multirow{5}{*}{30}
			\gdef\hc#1{\hcglob{#1}{14.7}{54}}
			\gdef\hcg#1{\hcglobg{#1}{0}{40}}
			& 0		& \hc{40.9}	+ 0.9	& \hc{14.7}	+ 4.2	& \hcg{ 0.0}	+ 0.0	& \hcg{ 6.1}	+ 2.2	\\
			&& 50	& \hc{50.4}	+ 3.4	& \hc{38.2}	+ 4.4	& \hcg{ 9.7}	+ 1.9	& \hcg{19.6}	+ 3.3	\\
			&& 100	& \hc{51.4}	+ 3.6	& \hc{48.7}	+ 2.8	& \hcg{11.3}	+ 3.1	& \hcg{30.1}	+ 5.6	\\
			&& 200	& \hc{52.4}	+ 3.8	& \hc{52.8}	+ 3.5	& \hcg{14.0}	+ 2.3	& \hcg{39.4}	+ 4.1	\\
			&& 400	& \hc{52.9}	+ 3.7	& \hc{54.0}	+ 3.3	& \hcg{15.2}	+ 3.2	& \hcg{40.0}	+ 5.6	\\
			\midrule
			\multirow{10}{*}{\rotatebox{90}{variable edge length}}
			& \multirow{5}{*}{5}
			\gdef\hc#1{\hcglob{#1}{9.4}{41}}
			\gdef\hcg#1{\hcglobg{#1}{0.6}{33.8}}
			& 0     & \hc{36.4} + 2.2    & \hc{9.4} + 0.3     & \hcg{0.6} + 1.0     & \hcg{2.9} + 0.3     \\
			&& 50    & \hc{40.3} + 0.9    & \hc{24.3} + 0.6    & \hcg{5.0} + 1.9     & \hcg{13.0} + 1.1    \\
			&& 100   & \hc{40.6} + 0.9    & \hc{34.7} + 0.7    & \hcg{6.5} + 2.6     & \hcg{26.2} + 2.5    \\
			&& 200   & \hc{40.9} + 0.9    & \hc{39.3} + 0.7    & \hcg{8.8} + 3.1     & \hcg{31.9} + 2.3    \\
			&& 400   & \hc{41.0} + 0.9    & \hc{40.4} + 0.8    & \hcg{9.4} + 3.1     & \hcg{33.8} + 2.7    \\
			\cmidrule{2-7}
			& \multirow{5}{*}{30}          
			\gdef\hc#1{\hcglob{#1}{8.1}{44.6}}
			\gdef\hcg#1{\hcglobg{#1}{0.1}{34.1}}
			& 0     & \hc{16.3} + 10.1   & \hc{8.1} + 1.1     & \hcg{0.1} + 0.4     & \hcg{1.7} + 1.2     \\
			&& 50    & \hc{33.6} + 5.0    & \hc{24.0} + 2.5    & \hcg{5.4} + 5.4     & \hcg{13.2} + 2.8    \\
			&& 100   & \hc{36.4} + 3.7    & \hc{37.8} + 2.4    & \hcg{10.2} + 4.6    & \hcg{27.5} + 2.3    \\
			&& 200   & \hc{38.7} + 3.2    & \hc{42.7} + 2.2    & \hcg{14.2} + 5.3    & \hcg{33.9} + 2.6    \\
			&& 400   & \hc{40.8} + 2.6    & \hc{44.4} + 2.9    & \hcg{20.9} + 3.8    & \hcg{34.1} + 3.5    \\
			\midrule
			\multirow{10}{*}{\rotatebox{90}{removed edges}}
			& \multirow{5}{*}{1}
			\gdef\hc#1{\hcglob{#1}{9.6}{42.1}}
			\gdef\hcg#1{\hcglobg{#1}{0}{34.5}}
			& 0     & \hc{39.5} + 5.3    & \hc{9.6} + 0.3     & \hcg{0.0} + 0.0     & \hcg{3.1} + 0.4     \\
			&& 50    & \hc{42.0} + 0.2    & \hc{24.6} + 0.3    & \hcg{0.9} + 1.5     & \hcg{13.3} + 0.8    \\
			&& 100   & \hc{42.1} + 0.1    & \hc{35.1} + 0.7    & \hcg{0.9} + 1.5     & \hcg{24.9} + 2.1    \\
			&& 200   & \hc{42.1} + 0.1    & \hc{39.4} + 0.3    & \hcg{0.9} + 1.4     & \hcg{31.9} + 3.3    \\
			&& 400   & \hc{42.1} + 0.1    & \hc{40.9} + 0.5    & \hcg{1.3} + 1.2     & \hcg{34.5} + 2.3    \\
			\cmidrule{2-7}
			& \multirow{5}{*}{8}
			\gdef\hc#1{\hcglob{#1}{9.8}{40.9}}
			\gdef\hcg#1{\hcglobg{#1}{0}{34.4}}
			& 0     & \hc{17.0} + 11.4   & \hc{9.8} + 0.9     & \hcg{0.0} + 0.0     & \hcg{3.3} + 0.6     \\
			&& 50    & \hc{36.1} + 9.1    & \hc{24.8} + 0.3    & \hcg{8.2} + 5.3     & \hcg{13.2} + 1.7    \\
			&& 100   & \hc{39.7} + 1.9    & \hc{35.3} + 0.9    & \hcg{10.3} + 5.1    & \hcg{25.5} + 2.4    \\
			&& 200   & \hc{40.4} + 1.4    & \hc{39.0} + 0.7    & \hcg{10.9} + 5.1    & \hcg{29.8} + 2.1    \\
			&& 400   & \hc{40.9} + 1.0    & \hc{40.5} + 0.7    & \hcg{11.8} + 5.5    & \hcg{34.4} + 2.6    \\
			\bottomrule
		\end{tabular}
		\end{adjustbox}
	\caption{
	Values and security holes of strategies in a changed graph $G_2$ optimized from the old strategy or from scratch.
	Initialization in $\sigma_1$ leads to higher $\Val$ and much smaller security holes in fewer iterations, unless the changes ($\changesize$) are too large.
	}
	\label{tab:experiments}
\end{table}

\paragraph{Summary}
All experiments unanimously confirm that, for small $\changesize$, the initialization in $\sigma_1$ leads to higher $\Val$ and much smaller security holes in fewer iterations.
For small $\changesize$, strategies obtained after $50$ optimization steps from $\sigma_1$ are not outperformed even by $400$
steps of optimization initiated in a randomly chosen strategy.

Only for large $\changesize$ in edge length, the optimizations initiated in a random strategy reach higher values than the optimization initiated in $\sigma_1$.
However, the security gap is huge.

In all our experiments, the average time needed for performing one optimization step is $110$ milliseconds, which is sufficient for performing our algorithm on the fly.

\paragraph{Mitigating Security Holes}

The conditions enabling randomized strategy switching are satisfied for all $\sigma_2$ summarized in utility changes of Tab.~\ref{tab:experiments}.
We have that $\Val_{G_1}(\sigma_1) = 42.1$ and $\Val_{G_2}(\sigma_1) = 40.9$, which means $\AVal_{G_1}(\sigma_1) = 57.9$ and $\AVal_{G_2}(\sigma_1) = 59.1$.
By Theorem~\ref{thm-hole}, the security hole can be reduced arbitrarily close to $1.2$ for \emph{all} $\sigma_2$, including those constructed from randomly chosen initial strategies.
Note that the improvement is \emph{significant} in almost all cases.

\section{Conclusions}

Our experiments show that our strategy adaptation algorithm is sufficiently efficient to be run on-the-fly, outperforming the best existing strategy synthesis algorithm \Regstar\ by three orders of magnitude. Furthermore, the experiments demonstrate the effectiveness of the designed methods for preventing and mitigating security holes.

An interesting open question is whether the Defender can effectively decrease the potential negative impact of environmental changes by \emph{preventive} adaptations of its current strategy.
This approach is applicable in cases when the probability of these changes happening in a near future is known.

\begin{acknowledgements}

Research was sponsored by the Army Research Office and was accomplished under
Grant Number W911NF-21-1-0189.
V\'{\i}t Musil was supported by Operational Programme Research, Development and
Education -- Project Postdoc2MUNI (No.\ CZ.02.2.69/0.0/0.0/18\_053/0016952).

\textit{Disclaimer.}\quad The views and conclusions contained in this document are those of the authors
and should not be interpreted as representing the official policies, either
expressed or implied, of the Army Research Office or the U.S.\ Government. The
U.S.\ Government is authorized to reproduce and distribute reprints for
Government purposes notwithstanding any copyright notation herein.
\end{acknowledgements}

\appendix
\newpage

\section{Changing Environment}
\label{app-changing}

\paragraph{Evaluating a switching strategy}

First, we demonstrate a situation when the Attacker may increase its expected utility by initiating an attack in the middle of a Defender's move.

\begin{example}
	\label{exa-delta}
	Let $G_1$ be the patrolling graph of Fig.~\ref{fig-delta}. Assume that at time $0$, the Defender starts walking from $v_1$ to $v_2$. Let $p$ denote the probability that when the Defender reaches $v_2$ at time $4$, he starts going back to $v_1$. Moreover, assume that at time $10$, the patrolling environment changes in such a way that $\alpha_2(v_1)>\alpha_1(v_1)$. If the Attacker attacks $v_1$ at time $0$ (corresponding to $\delta=0$), then his expected utility is $p\cdot \alpha_1(v_1)$. However, if he instead attacks at time $2$ (corresponding to $\delta=2$),
	then his expected utility is $p\cdot \alpha_2(v_1)$, because the attack is completed after the environmental change. Finally, note that if the Attacker postpones his attack to 
	the moment when the Defender leaves $v_2$ (\ie, at time $4$, corresponding to $\delta=0$), then the Defender is given one more opportunity to discover the attack (a visit to $v_1$ at time $12$ still catches this attack).
	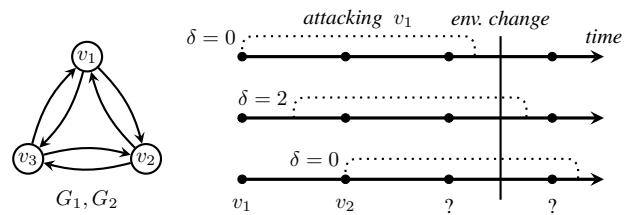
\begin{figure}[hbt]
		\begin{center}
\begin{tikzpicture}[x=1.7cm,y=1.7cm,scale =.8,every node/.style={scale=.8}]
			\node (v1) at (-.5,0)      [ran]  {$v_1$}; 
			\node (v2) at (-1.07,-1)  [ran]  {$v_3$}; 
			\node (v3) at (0.07,-1)  [ran]  {$v_2$}; 
			\path[tran] (v1)  edge   [bend left=15] (v2);
			\path[tran] (v2)  edge   [bend left=15] (v1);
			\path[tran] (v2)  edge   [bend left=15] (v3);
			\path[tran] (v3)  edge   [bend left=15] (v2);
			\path[tran] (v1)  edge   [bend left=15] (v3);
			\path[tran] (v3)  edge   [bend left=15] (v1);
			\node[draw=none] at (-.5,-1.4) {$G_1,G_2$};
			
			\draw[tran, very thick,->] (1,0) -- (4.5,0);
			\node[draw=none] at (4.5,.2) {\textit{time}};
			\filldraw[black] (1,0) circle (2pt) node[below=1ex]{};
			\filldraw[black] (2,0) circle (2pt) node[below=1ex]{};
			\filldraw[black] (3,0) circle (2pt) node[below=1ex]{};
			\filldraw[black] (4,0) circle (2pt) node[below=1ex]{};
			\draw[thick,dotted,rounded corners] (1,0) -- +(0,.2) -- node[above] {\textit{attacking } $v_1$}  +(2.25,.2) -- +(2.25,0);
			\node[draw=none] at (.7,.2) {$\delta=0$};
			
			\draw[tran, very thick,->] (1,-.6) -- (4.5,-.6);
			\filldraw[black] (1,-.6) circle (2pt) node[below=1ex]{};
			\filldraw[black] (2,-.6) circle (2pt) node[below=1ex]{};
			\filldraw[black] (3,-.6) circle (2pt) node[below=1ex]{};
			\filldraw[black] (4,-.6) circle (2pt) node[below=1ex]{};
			\draw[thick,dotted,rounded corners] (1.5,-.6) -- +(0,.2) -- node[above] {}  +(2.25,.2) -- +(2.25,0);
			\node[draw=none] at (1.2,-.4) {$\delta=2$};
			
			\draw[tran, very thick,->] (1,-1.2) -- (4.5,-1.2);
			\filldraw[black] (1,-1.2) circle (2pt) node[below=1ex]{$v_1$};
			\filldraw[black] (2,-1.2) circle (2pt) node[below=1ex]{$v_2$};
			\filldraw[black] (3,-1.2) circle (2pt) node[below=1ex]{?};
			\filldraw[black] (4,-1.2) circle (2pt) node[below=1ex]{?};
			\draw[thick,dotted,rounded corners] (2,-1.2) -- +(0,.2) -- node[above] {}  +(2.25,.2) -- +(2.25,0);
			\node[draw=none] at (1.7,-1) {$\delta=0$};
			
			\draw[thick] (3.5,.2) -- (3.5,-1.4);
			\node[draw=none] at (3.5,.36) {\textit{env. change}};
			\end{tikzpicture}
		\end{center}
		\caption{Exemplifying the advantage of $\delta>0$. Traversing every edge takes $4$ time units, completing an attack takes $9$ time units.}
		\label{fig-delta}
	\end{figure}
\end{example}

In what follows, we present exact definitions of all concepts which must be redefined for the changing environment.
Let $G_1=(V,T,E_1,\tm_1,d,\alpha_1)$ and $G_2=(V,T,E_2,\tm_2,d,\alpha_2)$ be two patrolling graphs.

\paragraph{Attacker's strategy}

An \emph{observation} in $G_1 \mapsto_t G_2$ is a pair $(o,\delta)$, where $o = v_1,\ldots, v_n,v_n {\rightarrow} v_{n+1}$
is a path in~$G$ and $\delta$ is an integer satisfying $0 \leq \delta < \tm_i(v_n {\rightarrow} v_{n+1})$,
where $i=1$ if the move $v_n {\rightarrow} v_{n+1}$ is initiated before time~$t$ (\ie, $\sum_{j=1}^{n-1}\tm_1(v_{j} {\rightarrow} v_{j+1})<t$), and $i=2$ otherwise.
The set of all observations is denoted by $\Obs$. 

An \emph{Attacker's strategy} is a function
\[
 \pi\colon \Obs \rightarrow \{\wait,\attack_\tau:\tau\in T\}.
\]
We require that
if $\pi((v_1,\ldots, v_n,v_n {\rightarrow} u),\delta) = \attack_\tau$ for some $\tau \in T$,
then $\pi((v_1,\ldots, v_n,v_n {\rightarrow} u),\delta') = \wait$ for all $0\leq\delta'<\delta$
and $\pi((v_1,\ldots, v_i,v_i {\rightarrow} v_{i+1}),\delta') = \wait$ for all $1\leq
i<n$ and any $\delta'$, ensuring that the Attacker can attack at most once.

\paragraph{Evaluating Defender's strategy}

Let $\sigma_1$ and $\sigma_2$ be Defender's strategies in $G_1$ and $G_2$, respectively, and let $\pi$ be an Attacker's strategy.

Let us fix a switching time $t\in\Nset$ and
an initial augmented vertex $\hat{v}$ where the Defender starts patrolling.
The \emph{expected Attacker's utility} $\EU^{\sigma_1 \mapsto_t \sigma_2,\pi}(\hat{v})$ is defined as
\begin{equation*}
	\sum_{\hat{e},\tau,\delta,t_0}  \mathbf{P}^{\sigma_1 \mapsto_t \sigma_2,\pi}(\hat{e},\tau,\delta,t_0) \cdot \Steal^{\sigma_1 \mapsto_t \sigma_2}(\hat{e},\tau,\delta,t_0)\,.
\end{equation*}
where $\mathbf{P}^{\sigma_1 \mapsto_t \sigma_2,\pi}(\hat{e},\tau,\delta,t_0)$ is the probability of initiating an attack at $\tau$ at time $t_0$ when the Defender has been going along~$\hat{e}$ for $\delta$ time units, and $\Steal^{\sigma_1 \mapsto_t \sigma_2}(\hat{e},\tau,\delta,t_0)$ denotes the expected cost ``stolen'' by this attack.

More precisely, let $\Attack(\pi,\hat{e},\tau,\delta,t_0)$ be the set of all $(\hat{v}_1,\ldots,\hat{v}_{n+1})$ such that $\pi((v_1,\ldots,v_n,v_n {\to} v_{n+1}),\delta) = \tau$, $\hat{e} = \hat{v}_n \to \hat{v}_{n+1}$ and $t_0 = \delta+\sum_{j=1}^{n-1}\tm_{i_j}(v_{j} {\rightarrow} v_{j+1})$ where again each $i_j\in\{1,2\}$ denotes in which graph the edge was traversed (\ie,
$i_j=1$ iff $\sum_{k=1}^{j-1}\tm_1(v_{k} {\rightarrow} v_{k+1})<t$). We put
\begin{equation*}
	\mathbf{P}^{\sigma_1 \mapsto_t \sigma_2,\pi}(\hat{e},\tau,\delta,t_0) = \sum_{h \in \Attack(\pi,\hat{e},\tau,\delta,t_0)} \Prob(h)
\end{equation*}
Furthermore, let $\mathbf{M}^{\sigma_1 \mapsto_t \sigma_2}(\hat{e},\tau,\delta,t_0)$ be the probability of missing (i.e., not visiting) an augmented vertex of the form $\hat{\tau}$ in the time interval $[t_0+1,t_0+d(\tau)]$ provided the Defender starts going along $\hat{e}$ at time $t_0-\delta$. We define 
\begin{equation*}
\Steal^{\sigma_1 \mapsto_t \sigma_2}(\hat{e},\tau,\delta,t_0) = \alpha_i(\tau) \cdot \mathbf{M}^{\sigma_1 \mapsto_t \sigma_2}(\hat{e},\tau,\delta,t_0)
\end{equation*}
where $i=1$ if $t_0+d(\tau)<t$ (\ie, the attack is \emph{completed} before the environment changes) and $i=2$ otherwise.

Finally, the \emph{Attacker's value of $\sigma_1 \mapsto \sigma_2$ in $G_1 \mapsto G_2$} is defined as
\begin{equation*}
    \AVal_{G_1 \mapsto G_2}(\sigma_1 {\mapsto} \sigma_2) =  \min_{\hat{v}}\ \sup_{\pi}\ \sup_{t}\
    \EU^{\sigma_1 \mapsto_t \sigma_2,\pi}(\hat{v})\,.
\end{equation*}

\section{Security Holes}
\label{app-holes}

\subsection{Estimating security holes}

 \renewcommand\pe{\;\texttt{+=}\;}
 \renewcommand\te{\;\texttt{*=}\;}
 \renewcommand\ee{\;\texttt{=}\;}

In this subsection, we present Algorithm~\ref{alg-query}, which efficiently answers queries for $p_{catch}(h)$, required by Algorithm~\ref{alg-search}.

\begin{algorithm}[h!]
	\SetAlgoLined
	\DontPrintSemicolon
	\SetKwInOut{Parameter}{parameter}\SetKwInOut{Input}{input}\SetKwInOut{Output}{output}
	\SetKwData{C}{C}\SetKwData{D}{D}\SetKwData{MX}{M}\SetKwData{f}{f}
	\SetKw{Or}{or}
	\SetKw{And}{and}
	\SetKw{Not}{not}
	\SetKwData{Array}{array of}
	\SetKwData{Rat}{Rat}
	\SetKwData{Der}{Der}
	\SetKwData{MinHeap}{min-heap of Paths}
	\SetKwData{Index}{indexed by}
	\SetKwProg{Macro}{macro}{:}{}
	\Input{Heap item $h_0$, patrolling graph $G_2$, regular strategy $\sigma_2$, $\tau\in T$}
	\Output{$p_{catch}(h_0)$}
	\BlankLine
\makebox[1em][l]{$\calV$} : \texttt{array} \mbox{ indexed by eligible pairs $\dhat V$} \;
	\makebox[1em][l]{$\calH$} : \texttt{min-heap} \mbox{ of tuples $(v,m,t,p)$ sorted by $t$}    \;
	\BlankLine
	$prob\ee 0$ \;
	\ForEach{$m\in\mem$}{
		$\calH.\mathit{insert}(\tau,m,0,1)$\; }
\While{\Not $\calH.\mathit{empty}$}{
		\Repeat{$\calH.\mathit{empty}$ \Or $\calH.\mathit{peek}.t > \ltime$}{
			$(v,m,t,p) \ee \calH.\mathit{pop}$ \;
			$\calV(v,m) \pe p$ \;
			\If{$(v,m) = (h_0.v,h_0.m)$}{$prob \pe p$ \;}
		}
\ForEach{$(v,m)$ such that $\calV(v,m) > 0$}{
			\ForEach{$\dhat{e} \ee ((v',m'),(v,m))\in\dhat E$}{
				$t \ee \ltime + \tm_2(e)$\;
				\uIf{$t \leq d(\tau) - h_0.t$ \And $v'\not=\tau$\label{line-t}}{
					$\calH.\mathit{insert}(v',m',t,\calV(v,m)*\sigma_2(\dhat{e}))$\;
				}
			}
			$\calV(v,m) \ee 0$\;
		}
	}
	\Return $prob$
	\caption{Computes $p_{catch}(h_0)$ for a given heap item $h_0$}
	\label{alg-query}
\end{algorithm}

Let $S$ denote the set of all heap items that were created during the search.
Now, each heap item $(v,m,t,p)$ corresponds to a certain set of paths from $(v,m)$ to $\dhat{\tau}$;
the meaning of $t$ and $p$ is the same as in the forward search (Algorithm~\ref{alg-search}).
Note that $p_{catch}(h_0)$ is equal to the sum of $h.p$ over all $h\in S$
such that $(h.v,h.m)=(h_0.v,h_0.m)$. Moreover, note that thanks to the backward manner
of the search, the created heap items are independent of $(h_0.v,h_0.m)$,
and the value of $h_0.t$ affects only the length of the search.
Thus, if we continue the search up to $t \leq d(\tau)$ (cf. line~\ref{line-t}),
then the computation is totally independent of $h_0$, and can be done,
for each $\tau$, just once as a precomputation step.

Then, $p_{catch}(h_0)$ can be computed as the sum of $h.p$ over all $h\in S$
such that $(h.v,h.m)=(h_0.v,h_0.m)$ and $h.t \leq d(\tau) - h_0.t$.
It remains to show how to compute this value quickly.
Note that naively going through all the items would result in $\Theta(|S|)$ time answering a query.
Thus, we split the items into a $|\dhat{V}|$-indexed array of buckets of items sharing the same $(v,m)$
(thereby reducing the average answering time by a factor of $|\dhat{V}|$), and we keep the items in each bucket
sorted by $t$ (further halving the average time). Thus, exactly those items that are included in the sum are quickly found.
However, the resulting algorithm still proved too slow to perform all our experiments.
Therefore, we come up with the following trick: We precompute the prefix sums in each bucket, \ie, to each item $h$,
we add another component $h.s$ which equals the sum of $h.p$ over all items $h'$ in the same bucket with
$h'.t \leq h.t$ (we also add a sentinel item $h$ with $h.t<0$ and $h.s=0$, which corresponds to the empty sum).
Then, $p_{catch}(h_0)$ is equal to $h'.s$ for some $h'$ whose position can be found by binary search.
This reduces the average answering time to $O(\log(|S|/|\dhat{V}|))$.

\subsection{Preventing security holes}

We synthesize the strategies by a gradient ascent.
In one trial, we start from an initial strategy, repeatedly compute its value and update the strategy in the direction of the value's gradient.
In bounded time frames, the strategy \emph{evaluation} is the bottleneck and limits the total number of trials and iterations per trial.

While \Regstar has impressive time performance, \citet{KKMR:Regstar-UAI} conclude that \Regstar rarely converges to high-valued strategies.
Therefore, we redesign the optimization part completely.
The evaluation part of \Regstar is kept, but we significantly improve the efficiency of the gradient computation.

\paragraph{Strategy evaluation}

We have implemented a PyTorch module in C++ for the strategy evaluation.
The evaluation of $\Val(\sigma)$ (so-called \emph{forward pass}) is the same as in \Regstar
but the automatic differentiation of $\Val$ (so-called \emph{backward pass}) is computed differently.

\Regstar computes the gradient in \emph{forward mode},
\ie, for each node $n$ of the computation graph, $\partial n/\partial \sigma(\dhat{e})$ is computed for each $\dhat{e}\in\dhat{E}$,
going from the input nodes (corresponding to $\sigma(\dhat{e})$ for the individual augmented edges $\dhat{e}$) to the output node (corresponding to $\Val(\sigma)$).

Consistently with PyTorch, we perform this computation in \emph{reverse mode} \citep[see \eg][]{rumelhart1986learning},
\ie, for each $n$, we compute $\partial \Val(\sigma)/\partial n$, going from the output node to the input nodes.
This reduces the time complexity by a factor of $|\dhat{E}|$.
We demonstrate the speed-up experimentally.

\paragraph{Optimization loop}

We implemented the optimization in PyTorch, one of the standard tools for differentiable programming and optimization.
Its overview is given in Algo.~\ref{alg:optim}.

Compared to \Regstar, we do not work and update the \emph{strategy} directly, since most of the updates violate the constraint of being a probability distribution.
Instead, we start with a space of unconstrained real-valued \emph{parameters} from which the strategy is generated by
the \emph{Softmax} function. Any update in the parameter space always yields a valid strategy. 

\SetAlFnt{\small}
\begin{algorithm}[h!]
	\SetAlgoLined
	\DontPrintSemicolon
	\SetKwInOut{Parameter}{parameter}\SetKwInOut{Input}{input}\SetKwInOut{Output}{output}
	\SetKwData{C}{C}\SetKwData{D}{D}\SetKwData{MX}{M}\SetKwData{f}{f}
	\SetKw{Or}{or}
	\SetKw{And}{and}
	\SetKw{Not}{not}
	\SetKwData{Array}{array of}
	\SetKwData{MinHeap}{min-heap of Paths}
	\SetKwData{Index}{indexed by}
	\SetKwProg{Macro}{macro}{:}{}
    strategy\_params $\gets$ \textbf{Init}()\;
\For{step $\in$ steps}{ 
strategy $\gets$ \textbf{Softmax}(strategy\_params)\;
	steals = $\{\Steal^\sigma(\hat{e},\tau): \dhat e, \tau\}$ $\gets$ \textbf{Evaluate}(strategy)\;
	loss $\gets$ \textbf{Loss}(steals)\;
	loss.backward()\;
	strategy\_params.grad += \textbf{Noise}(step)\;
	Adam\_optimizer.step()\;
	strategy $\gets$ \textbf{Threshold}(\textbf{Softmax}(strategy\_params))\;
	steals $\gets$ \textbf{Evaluate}(strategy)\;
	dval $\gets$ $\alpha_{\max} - \max(\text{steals})$\;
	\textbf{Save} dval, strategy\;
}
\Return strategy with the highest dval
~\\[2ex]

\caption{Strategy optimization}
	\label{alg:optim}
\end{algorithm}

On the forward pass, $\Steal^\sigma(\hat{e},\tau)$ is evaluated for all $\dhat e$ and $\tau$. It holds that $\AVal_G(\sigma) = \max_{\hat{e},\tau} \Steal^\sigma(\hat{e},\tau)$ \cite[see][Claim~1]{KKMR:Regstar-UAI}, but instead of hard maximum, we use softened variant (denoted by Loss) described below. 
Next, gradients are computed by PyTorch's autodiff, and we add decaying Gaussian noise.
For parameters update, we use Adam optimizer \citep{Adam}.

Note that Softmax never outputs probability distribution containing zeros.
To allow for endpoint values, we cut the outputs at a certain threshold on a test time.
Contrary to \Regstar, we never threshold during optimization, since it disallows using the cut parameters at later stages.
This is crucial for the optimization in changing environment.

\paragraph{Loss function}

To compute the $\Val$, we need to evaluate every $\Steal(\dhat e,\tau)$
and take the maximum $m=\max\{\Steal(\dhat e,\tau)\}$
However, taking $m$ as a loss function leads to much slower
optimization as the signal for parameters update passes through single $\Steal$. Instead,
we take
\begin{equation} \label{E:loss}
	\text{Loss} = \sum_{\dhat e,\tau} \varphi_\varepsilon\bigl(\Steal(\dhat e, \tau)\bigr)^3,
\end{equation}
where 
\begin{itemize}
\item $\varphi_\varepsilon(t)=0$ for $t\in[0,m-\varepsilon)$,
\item $\varphi_\varepsilon(t)=1+(t-m)/\varepsilon$ for $t\in[m-\varepsilon,m]$,
\end{itemize}
where $\varepsilon$ is a hyperparameter.
This choice of loss function optimizes more steals simultaneously and prioritizing
those close to hard maximum $m$.

\paragraph{Strategy initialization}

When optimizing from random strategy, we initialize the parameters so that each outgoing edge gets assigned probability from a uniform distribution on $[0,1)$ which is then normalized over all outgoing edges.

\paragraph{Hyperparameters}

For evaluation on a test time, we \emph{threshold} the probabilities at 0.001 in all experiments.
Remaining hyperparameters, namely optimizer's learning rate (lr), and $\varepsilon$ from in loss~\eqref{E:loss} alter with experiments.

\subsection{Mitigating security holes}
\label{sec-mitigate}

Here we present a proof of Theorem~\ref{thm-hole}. Let $G_1,G_2$ be patrolling graphs and $\sigma_1,\sigma_2$ be
Defender's strategies in $G_1$ and $G_2$, respectively. Assume that:
\begin{enumerate}
\item[(a)] each edge used by $\sigma_1$ is still present in $G_2$;
\item[(b)] for every $(v,m)$ of $G_1$ visited by $\sigma_1$ with positive probability, there is $(v,m')$ of $\sigma_2$ such that $\Val_{G_2}(\sigma_2)(v,m') =  \Val_{G_2}(\sigma_2)$;
\item[(c)] $\hole_{G_1 \mapsto G_2}(\sigma_1,\sigma_1)=0$.
\end{enumerate}
\begin{remark}
The necessity of assumption (a) is apparent: otherwise the Defender could not keep playing according to $\sigma_1$ in $G_2$ at all.
Assumption (b) rules out the possibility that the Defender would switch to $\sigma_2$ in a vertex where $\sigma_2$ is unable to guarantee its long-term
level of protection.
Assumption (c) states that when performing $\sigma_1$, no temporary anomalies arise when the environment changes from $G_1$ and $G_2$,
which would exist neither before, nor after the change. In practice, this condition is scarcely violated---it may happen, \eg, when the Defender
uses edges $e,e'$ with $\tm_1(e)=2,\tm_1(e')=1,\tm_2(e)=1,\tm_2(e')=2$. Then, the total travel time on $e$ and $e'$ is $3$ time units in both $G_1$ and $G_2$.
However, if the environment changes in the middle of the path, then the total travel time increases to $4$.
\end{remark}

Let $\sigma_\kappa$ denote the Defender's strategy which performs the $\kappa$-randomized switch from $\sigma_1$ to $\sigma_2$,
\ie, $\sigma_\kappa$ behaves as $\sigma_1$ in $G_1$, and when the environment changes to $G_2$,
the Defender flips a $\kappa$-biased coin when visiting the next vertex, and switches to $\sigma_2$ only with probability $\kappa$; with the remaining probability $1-\kappa$, it continues executing $\sigma_1$ and flipping the coin in the next vertex again. This goes on until the switch to $\sigma_2$ is performed. 

Now we prove Theorem~\ref{thm-hole}.

\thmhole*

\begin{proof}
The Defender keeps flipping a $\kappa$-biased coin each time he visits a vertex until the $\kappa$-probability comes true.
Hence, the expected number of coin flips (as well as vertices visited) is $1/\kappa$.
Since the time passed between visits to consecutive vertices is at most $\textit{max-time$_2$}$,
the expected time between the environmental change and the switch to $\sigma_2$ is at most $\textit{max-time$_2$}/\kappa$.

As for the security hole, we first prove an upper bound on $\AVal_{G_1 \mapsto G_2}(\sigma_\kappa)$. Thus, fix an Attacker's strategy $\pi$,
a switching time $t\in\Nset$, an initial augmented vertex $\dhat{v}$ and a tuple $(\hat{e},\tau,\delta,t_0)$ such that $\mathbf{P}^{\sigma_\kappa,\pi}(\hat{e},\tau,\delta,t_0)>0$.
We calculate an upper bound on the corresponding steal $s=\Steal^{\sigma_\kappa}(\hat{e},\tau,\delta,t_0)$.
Clearly, exactly one of the following three possibilities occurs:
\begin{enumerate}
	\item the Defender plays according to $\sigma_1$ the entire time the attack is in progress;
	\item the Defender switches from $\sigma_1$ to $\sigma_2$ while the attack is in progress;
	\item the Defender plays according to $\sigma_2$ the entire time the attack is in progress.
\end{enumerate}
Thus, denoting by $p_i$ the probability that the $i$-th possibility occurs
and by $s_i$ the expected cost stolen provided the $i$-th possibility occurs,
we can write $s = p_1s_1 + p_2s_2 + p_3s_3$. Further, the following bounds hold:
\begin{itemize}
	\item $s_1 \leq \max\{\AVal_{G_1}(\sigma_1), \AVal_{G_2}(\sigma_1)\}$ (from assumption (c))
	\item $s_2 \leq \alpha_{\max}(G_2)$ (trivial)
	\item $s_3 \leq \AVal_{G_2}(\sigma_2)$ (from assumption (b))
	\item $p_1 + p_3 \leq 1$ (probabilities of disjoint events)
	\item $p_2 \leq 1 - (1-\kappa)^{\dm}$ (there can be at most $\dm$ coin flips while the attack is in progress)
\end{itemize}
Hence, denoting $\eta_\kappa = (1 - (1-\kappa)^{\dm}) \cdot \alpha_{\max}(G_2)$
and $\Lambda = \max\{\AVal_{G_1}(\sigma_1), \AVal_{G_2}(\sigma_1), \AVal_{G_2}(\sigma_2)\} + \eta_\kappa$,
we get $s \leq \Lambda$.
Since $s$ was the steal for an arbitrary tuple $(\hat{e},\tau,\delta,t_0)$ which can occur with positive probability
and $\EU^{\sigma_\kappa,\pi}(\hat{v})$ is by definition equal to a convex combination of these steals,
it follows that $\EU^{\sigma_\kappa,\pi}(\hat{v}) \leq \Lambda$.
Thus, it is clear from the definition that $\AVal_{G_1 \mapsto G_2}(\sigma_\kappa) \leq \Lambda$.
Subtracting $\max\{\AVal_{G_1}(\sigma_1), \AVal_{G_2}(\sigma_2)\}$ from both sides
and using the trivial fact that
\begin{equation*}
	\max\{a,b,c\}-\max\{a,c\} \leq \max\{0,b-\max\{a,c\}\}
\end{equation*}
holds for any real numbers $a,b,c$, we finally get
\begin{equation*}
\hole_{G_1 \mapsto G_2}(\sigma_\kappa) \leq \varrho + \eta_\kappa.
\end{equation*}
Clearly, $\lim_{\kappa\to 0^+}\eta_\kappa=0$, so the security hole can indeed be pushed arbitrarily close to~$\varrho$.
\end{proof}

\section{Experiments}
\label{app-experiments}

The experiments run on 7 desktop machines with Ubuntu 18.04.5 LTS running on
Intel\textsuperscript{\textregistered} Core\texttrademark{} i7-8700 Processor (6 cores, 12 threads) with 32GB RAM. 
Python version and the required packages are specified in the eclosed pipfile.

\subsection{Strategy improvement analysis}

We assess our strategy synthesis algorithm in comparison with \Regstar on all experiments of \citet{KKMR:Regstar-UAI}. The patrolling graphs considered in \citet{KKMR:Regstar-UAI} model an ATM network in Montreal, and office buildings with $n$-floors connected by stairs. The $n\geq 1$ is a parameter.

The graph for a building with three floors is shown in Fig.~\ref{fig-building}. 
The squares represent offices, and the circles represent corridor locations where the Defender may decide to visit the neighbouring offices. The ``long'' edges represent stairs.
Every office's cost is set to 100.
The time needed to complete an intrusion is set to 100, 200, and 300 for the building with one, two, and three floors, respectively.

\tikzstyle{stoch}=[circle,thick,draw,minimum size=1.5em,inner sep=0em]
\tikzstyle{max}=[rectangle,thick,draw,minimum size=1.5em,inner sep=0em]
\tikzstyle{tran}=[thick,draw,->,>=stealth,rounded corners]
\tikzstyle{loop left}=[tran, to path={.. controls +(150:.8) 
	and +(210:.8) .. (\tikztotarget) \tikztonodes}]
\tikzstyle{loop right}=[tran, to path={.. controls +(30:.8) 
	and +(330:.8) .. (\tikztotarget) \tikztonodes}]
\tikzstyle{loop above}=[tran, to path={.. controls +(60:.5) 
	and +(120:.5) .. (\tikztotarget) \tikztonodes}]
\tikzstyle{loop below}=[tran, to path={.. controls +(240:.8) 
	and +(300:.8) .. (\tikztotarget) \tikztonodes}]
\begin{figure}[tb]
\centering
\begin{tikzpicture}[scale=.5, every node/.style={scale=0.6}, x=1.7cm, y=1.25cm]
	\foreach \x in {1,2,3,4}{
        \foreach \y in {0,2,3,5,6,8}{ 
			\node [max] (r\x\y) at (\x,\y)  {};
        };
        \foreach \d/\y/\u in {0/1/2,3/4/5,6/7/8}{ 
			\node [stoch] (c\x\y) at (\x,\y)  {};
            \draw [tran,-] (c\x\y) -- node[left] {$5$} (r\x\u);
            \draw [tran,-] (c\x\y) -- node[left] {$5$} (r\x\d);
        };
    };
    \foreach \y/\f in {1/{$1^{\textit{st}}$},4/{$2^{\textit{nd}}$},7/{$3^{\textit{rd}}$}}{
        \node [max] (r0\y) at (0,\y)  {};
        \node [max] (r5\y) at (5,\y)  {};
        \draw [tran,-] (r0\y) -- node[above] {$5$} (c1\y);
        \draw [tran,-] (r5\y) -- node[above] {$5$} (c4\y);
        \node at (6,\y) {\f \textrm{ floor}};
        \foreach \x/\r in {1/2,2/3,3/4}{
           \draw [tran,-] (c\x\y) -- node[above] {$2$} (c\r\y);  
        }
    }
    \draw [tran,-,rounded corners] (c11) -- +(-.7,1) -- node[left] {10} +(-.7,2) -- (c14);    
    \draw [tran,-,rounded corners] (c14) -- +(-.7,1) -- node[left] {10} +(-.7,2) -- (c17);    
    \draw [tran,-,rounded corners] (c41) -- +(.7,1) -- node[right] {10} +(.7,2) -- (c44);    
    \draw [tran,-,rounded corners] (c44) -- +(.7,1) -- node[right] {10} +(.7,2) -- (c47);   
\end{tikzpicture}
\caption{A building with three floors.}
\label{fig-building}
\end{figure}
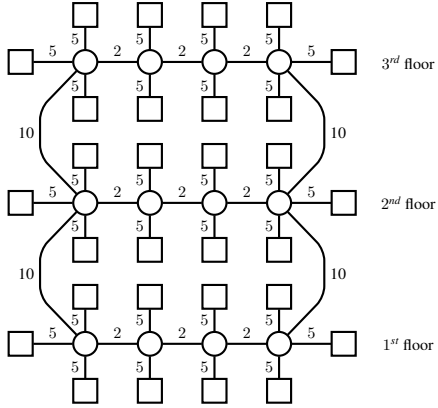

\paragraph{Hyperparameters}

We tested the sensitivity of our optimization scheme to the choice of hyperparameters
on experiments from Sec.~3.5.
For every experiment type (office building with 1, 2 or 3 floors) and memory size (1-8), we run 32 tests (average of 20 optimization trials per 200 steps)
where we sampled learning rate from loguniform distribution over $[0.01, 1]$, $\varepsilon$
from $[0, 0.3]$ uniformly and $\text{pwr}\in\{1,2,3\}$. Here, pwr refers
to the exponent in loss function~\eqref{E:loss}. Outcoming strategy values are normalized by the best ones within category (floor numbers, memory size) and summarized in Fig.~\ref{fig:buildings_grid}.
We observe stable behaviour and we set the final parameters to $\text{lr}=0.15$, $\varepsilon=0.1$ and $\text{pwr}=3$.

For experiment of Sec.~3.4 (Montreal map), we run 20 tests (average of 20 trials per 400 steps)for memory sizes 1-4. We kept $\text{pwr}=3$ and sampled lr and $\varepsilon$ as above.
We observed similar stable behaviour as before and set $\text{lr}=0.1$ and $\varepsilon=0.025$.

\begin{figure}
\centering
	\includegraphics[width=\columnwidth]{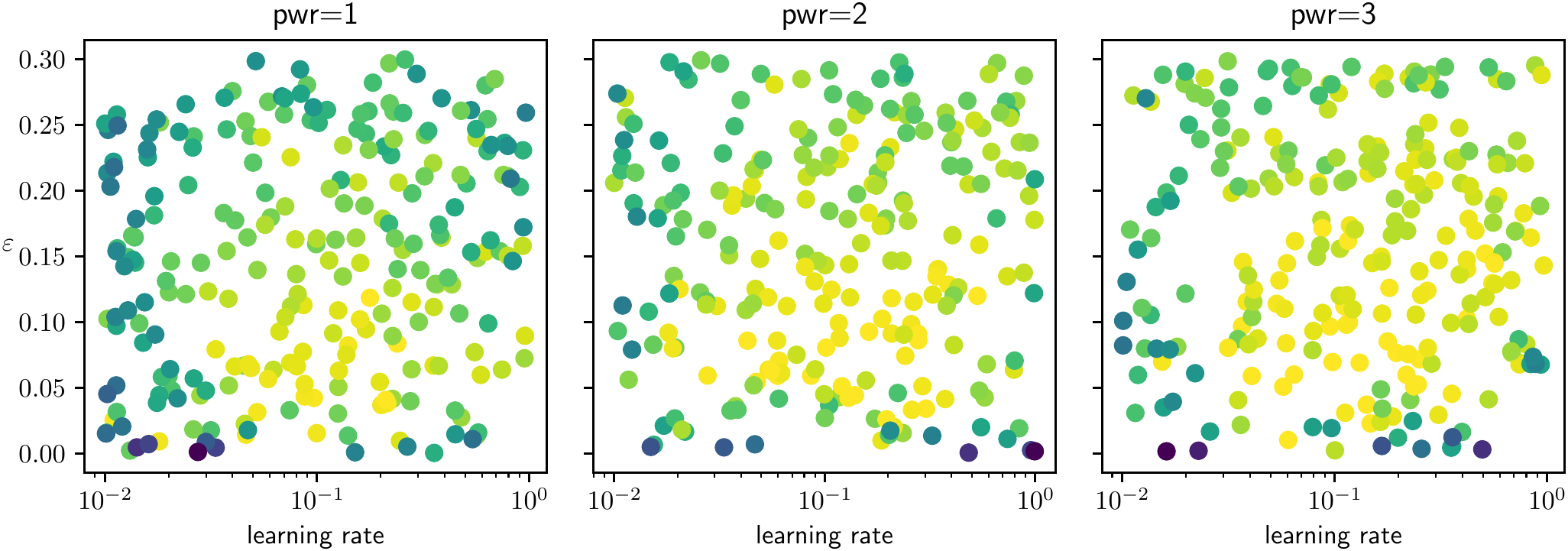}
	\caption{Sensitivity to hyperparameter choice on office building experiment (the brighter color, the higher $\Val$).
	Final parameters were set to $\text{lr}=0.15$, $\varepsilon=0.1$ and $\text{pwr}=3$.}
	\label{fig:buildings_grid}
\end{figure}

\paragraph{Strategy value comparison}

In Fig.~\ref{fig:exp1_montreal} and Fig.~\ref{fig:exp1_building}, we compare the Defender's values obtained by 200 trials of \Regstar and our approach on the examples of Sec.~3.4 and Sec.~3.5 of \citet{KKMR:Regstar-UAI}.  One can see that in all the cases, the new algorithm reaches higher values consistently.

\begin{figure}
\centering
	\includegraphics[width=\columnwidth]{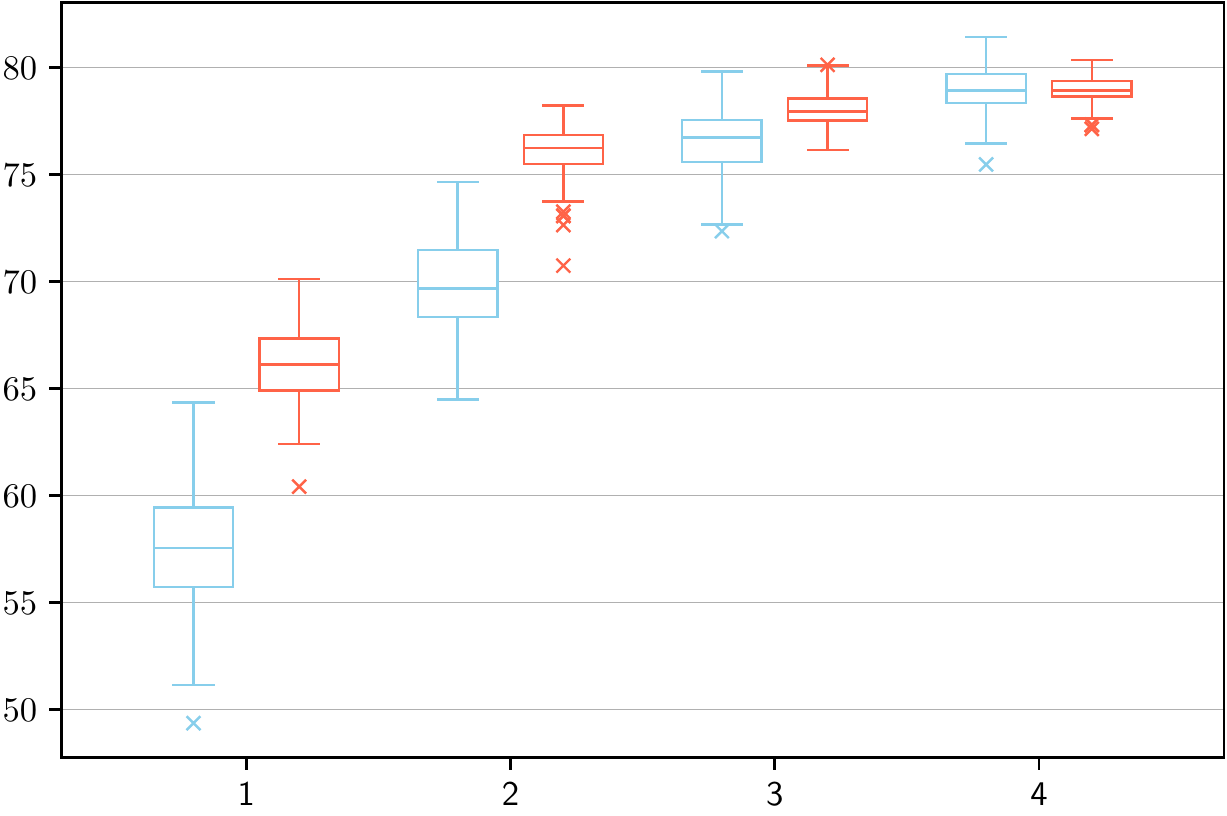}
	\caption{Values of strategies synthesized by \Regstar (blue) and our algorithm (red) on the Montreal-ATM examples with various memory sizes.}
	\label{fig:exp1_montreal}
\end{figure}

\begin{figure}
	\begin{subfigure}{\columnwidth}
		\centering
	    \includegraphics[width=0.9\textwidth]{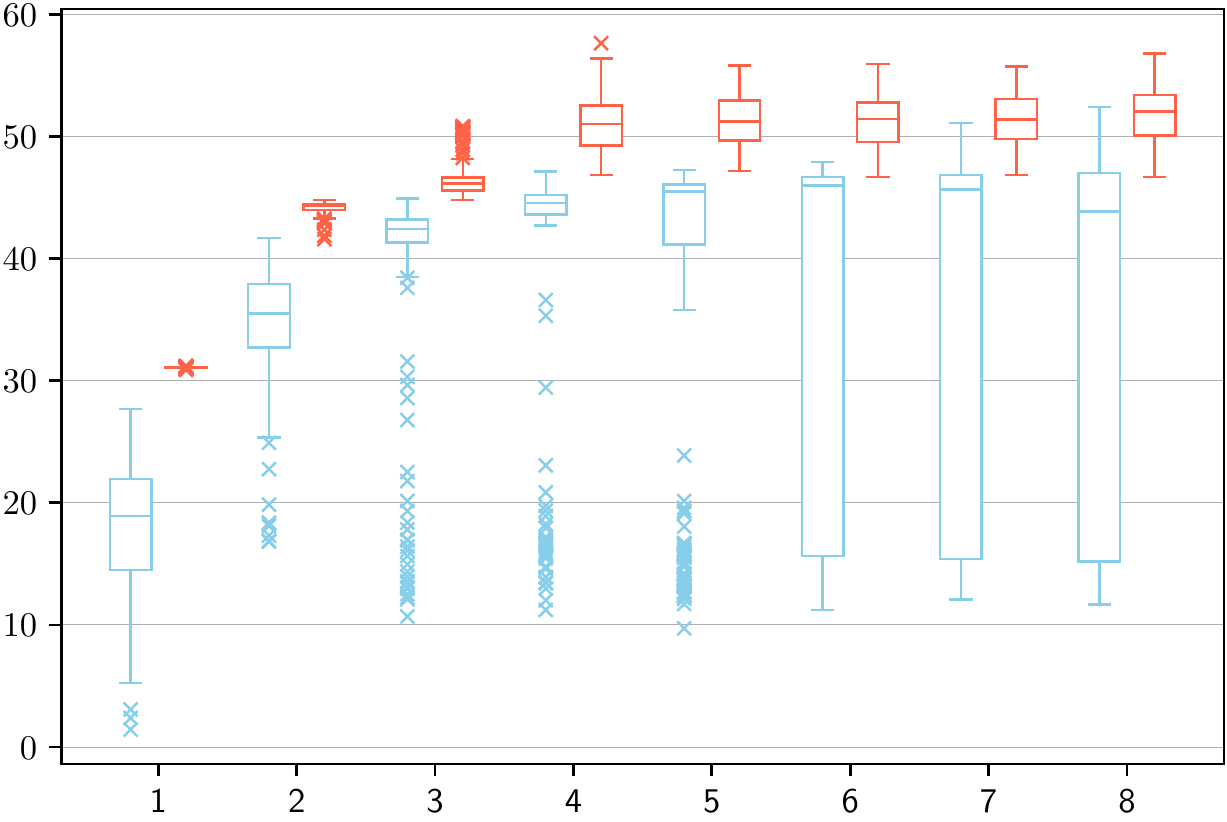}
	    \caption{office building with one floor\\[-2mm]}
	\end{subfigure}
	\begin{subfigure}{\columnwidth}
		\centering\vspace*{2mm}
		\includegraphics[width=0.9\columnwidth]{uai_ex3b_val_compare.pdf}
		\caption{office building with two floors\\[-2mm]}
	\end{subfigure}
	\begin{subfigure}{\columnwidth}
		\centering\vspace*{2mm}
		\includegraphics[width=0.9\columnwidth]{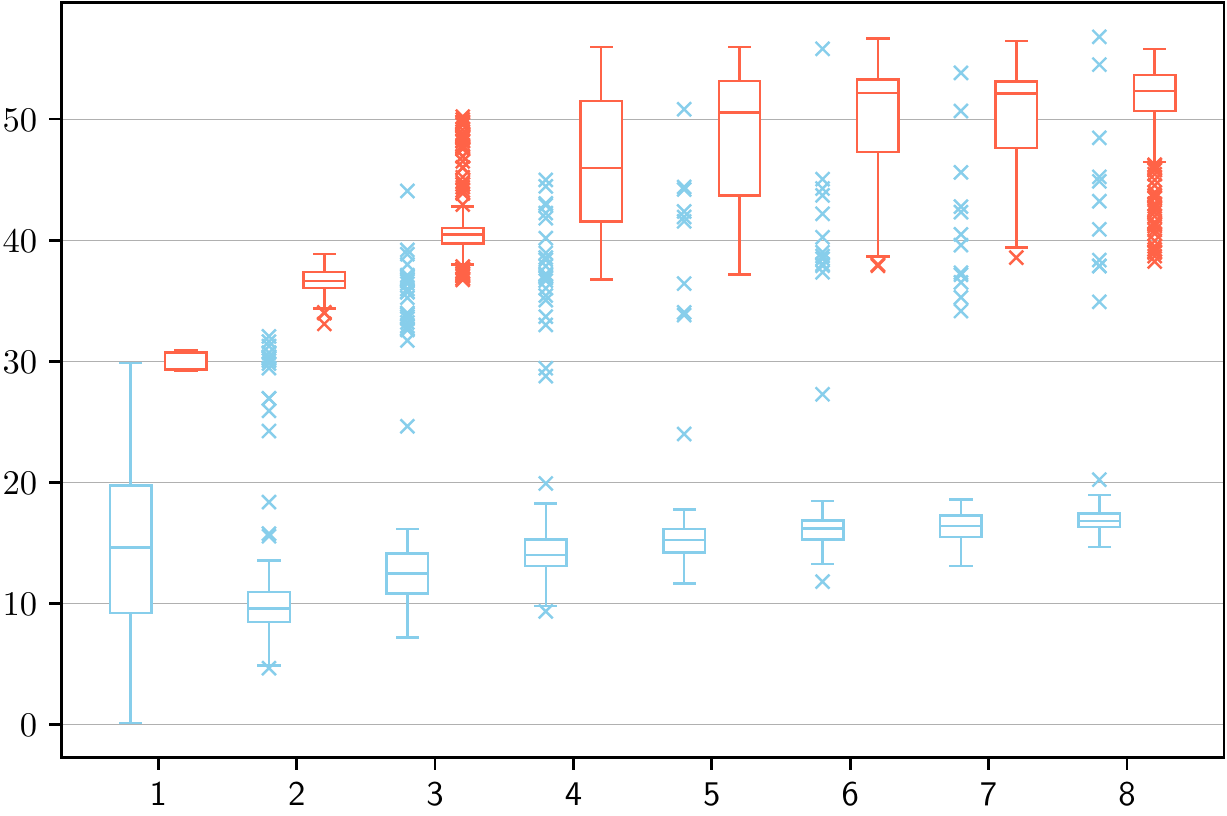}
		\caption{office building with three floors\\[-2mm]}
	\end{subfigure}
	\begin{subfigure}{\columnwidth}
		\centering\vspace*{2mm}
		\includegraphics[width=0.9\columnwidth]{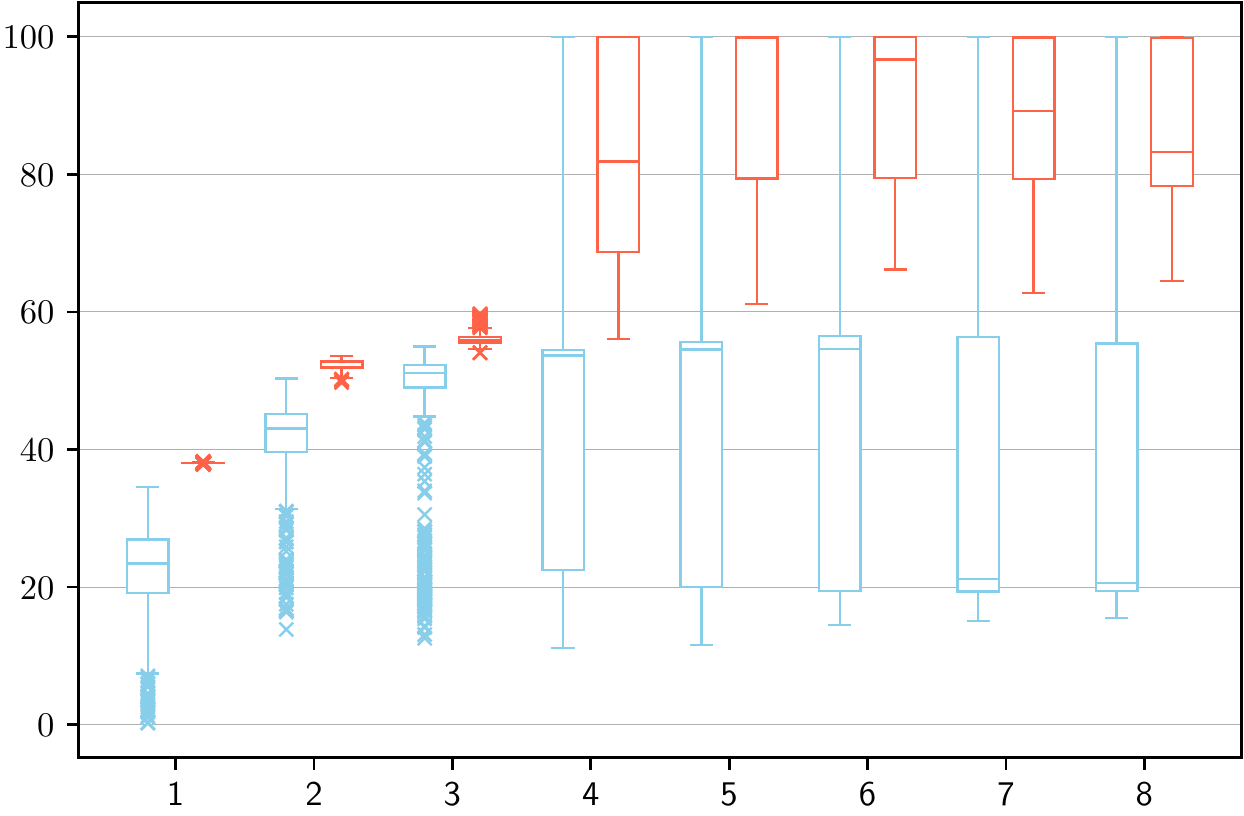}
		\caption{office building with one floor and tight attack time\\[-2mm]}
	\end{subfigure}
	\caption{Values of strategies synthesized by \Regstar (blue) and our algorithm (red) on the building examples with various memory sizes.}
	\label{fig:exp1_building}
\end{figure}

\paragraph{Running time of the forward and backward pass}

In Tab.~\ref{tab:building-times}, we present execution times for all office-buildings experiments.
Note that all the execution times exhibit the same trends as the subpart presented in the main text.

\renewcommand\tabHead{& \multicolumn{2}{c}{forward [ms]}&\multicolumn{2}{c}{backward} \\
$m$ & \multicolumn{1}{c}{\Regstar}& \multicolumn{1}{c}{Ours}& \multicolumn{1}{c}{\Regstar [ms]}& \multicolumn{1}{c}{Ours [ms]} \\
}

\begin{table}
\centering\small
\begin{subtable}{\columnwidth}
		\begin{tabular*}{\columnwidth}{@{}l@{\extracolsep{\fill}}d{2,1}d{2,1}d{4,3}d{1,1}@{}}
            \toprule
            \multicolumn{5}{c}{Office building with one floor}\\[1ex]
\tabHead
			\midrule
			1	& 2 + 0	& 1 + 0	& 3 + 1	& 0 + 0 \\
2	& 6 + 1	& 5 + 1	& 30 + 4	& 1 + 0 \\
3	& 13 + 3	& 11 + 1	& 153 + 18	& 1 + 0 \\
4	& 22 + 2	& 20 + 2	& 473 + 28	& 2 + 0 \\
5	& 34 + 2	& 31 + 3	& 1142 + 72	& 3 + 0 \\
6	& 50 + 3	& 46 + 4	& 2388 + 160	& 4 + 0 \\
7	& 66 + 4	& 62 + 6	& 4419 + 310	& 5 + 0 \\
8	& 85 + 2	& 80 + 8	& 7755 + 561	& 6 + 1 \\
 \end{tabular*}
\end{subtable}
	\begin{subtable}{\columnwidth}
		\begin{tabular*}{\linewidth}{@{}l@{\extracolsep{\fill}}d{3,1}d{3,2}d{4,3}d{2,1}@{}}
\\[1ex]
            \toprule
            \multicolumn{5}{c}{Office building with two floors}\\[1ex]
			\tabHead
			\midrule
			1	& 12 + 1	& 11 + 1	& 44 + 4	& 1 + 0 \\
2	& 50 + 3	& 48 + 5	& 642 + 36	& 4 + 0 \\
3	& 117 + 3	& 110 + 11	& 3210 + 210	& 8 + 1 \\
4	& 217 + 4	& 198 + 19	& 10223 + 725	& 14 + 2 \\
5	& 335 + 7	& 308 + 30	& 26459 + 1920	& 23 + 2 \\
6	& 492 + 4	& 451 + 44	& 57414 + 3906	& 33 + 3 \\
7	& 674 + 8	& 609 + 60	& 106585 + 7044	& 46 + 5 \\
8	& 913 + 9	& 805 + 79	& 186254 + 12028	& 60 + 7 \\
 \end{tabular*}
\end{subtable}
	\begin{subtable}{\columnwidth}
		\begin{tabular*}{\linewidth}{@{}l@{\extracolsep{\fill}}d{3,1}d{3,2}d{4,3}d{2,1}@{}}
\\[1ex]
            \toprule
            \multicolumn{5}{c}{Office building with three floors}\\[1ex]		
			\tabHead
			\midrule
			1	& 37 + 4	& 41 + 4	& 209 + 17	& 3 + 0 \\
2	& 138 + 4	& 176 + 17	& 140 + 73	& 13 + 1 \\
3	& 322 + 18	& 403 + 39	& 846 + 473	& 29 + 3 \\
4	& 649 + 24	& 719 + 70	& 3832 + 2206	& 52 + 5 \\
5	& 1016 + 35	& 1138 + 110	& 10799 + 6012	& 85 + 9 \\
6	& 1471 + 55	& 1631 + 161	& 26630 + 15673	& 122 + 13 \\
7	& 2092 + 86	& 2256 + 220	& 55559 + 31408	& 172 + 19 \\
8	& 2836 + 138	& 2932 + 291	& 101087 + 53157	& 222 + 26 \\
 \end{tabular*}
\end{subtable}
	\begin{subtable}{\columnwidth}
		\begin{tabular*}{\linewidth}{@{}l@{\extracolsep{\fill}}d{3,1}d{3,2}d{4,3}d{2,1}@{}}
\\[1ex]
            \toprule
            \multicolumn{5}{c}{Office building with one floor and tight attack time}\\[1ex]
			\tabHead
			\midrule
			1	& 2 + 0	& 1 + 0	& 2 + 0	& 0 + 0 \\
2	& 5 + 1	& 5 + 0	& 24 + 4	& 1 + 0 \\
3	& 11 + 1	& 12 + 1	& 138 + 9	& 1 + 0 \\
4	& 21 + 6	& 22 + 2	& 441 + 17	& 2 + 0 \\
5	& 32 + 2	& 34 + 3	& 1084 + 46	& 3 + 0 \\
6	& 48 + 13	& 50 + 4	& 2243 + 107	& 4 + 0 \\
7	& 64 + 3	& 68 + 6	& 4183 + 198	& 5 + 0 \\
8	& 86 + 4	& 88 + 7	& 7340 + 345	& 6 + 1 \\
 \end{tabular*}
\end{subtable}
	\caption{Effect of differentiation of $\sigma\mapsto\Val(\sigma)$ in reverse mode (Ours) compared to forward mode (\Regstar) on examples from Sec. 5.3 of \citet{KKMR:Regstar-UAI}.}
	\label{tab:building-times}
\end{table}

\subsection{Changing environment}

We fix a patrolling graph $G_1$ consisting of $15$ locations in the downtown of Vancouver (Fig.~\ref{fig:Vancouver_map}). The target costs are set between $80$ and $100$ at random. Furthermore,
we select $72$ edges connecting the targets with lengths measured in taxicab distance in hundreds of meters. Attack times are fixed to $64$, giving the Defender chance to discover an attack starting $6.4$km far away. For $G_1$, we fix a strategy $\sigma_{1}$ where $\Val_{G_1}(\sigma_1)=42.1$. 

\begin{figure}
	\centering
	\includegraphics[width=.9\columnwidth]{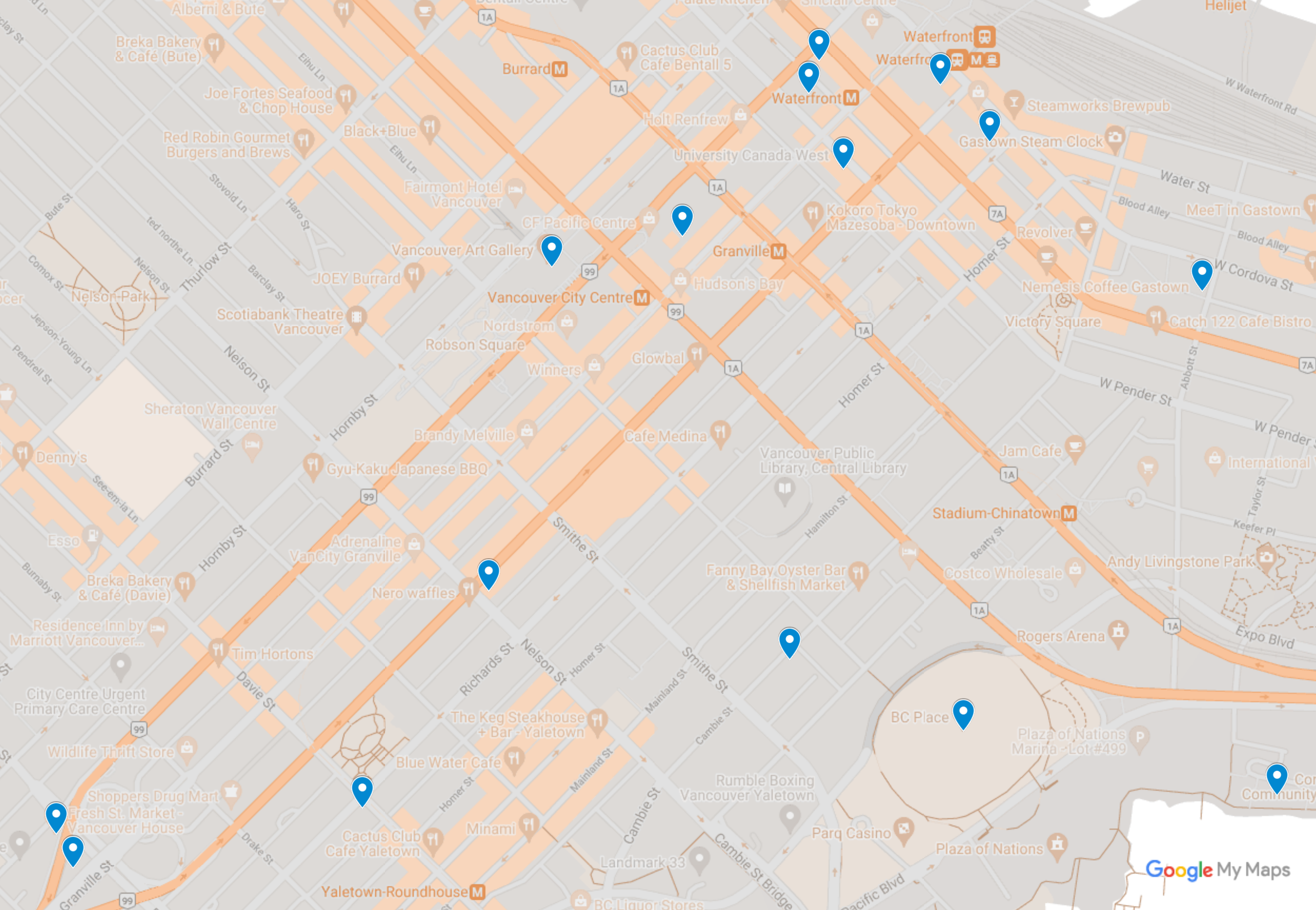}
	\caption{The patrolled locations of Vancouver downtown.}
	\label{fig:Vancouver_map}
\end{figure}

\paragraph{Hyperparameters}

Before making any graph modifications and strategy adaptations, we synthesized
strategy $\sigma_1$ in the original graph. We first tested the sensitivity to hyperparameters
choice by performing 80 tests (average over 20 trials per 600 steps) with lr sampled from loguniform distribution over $[0.01, 1]$ and $\varepsilon$ from loguniform distribution over $[0.001, 0.15]$. Parameter pwr was fixed to 3. The outcomes are summarized in Fig.~\ref{fig:vancouver_grid}.
For all the remaining experiments, we set $\text{lr}=0.07$ and $\varepsilon=0.025$.

\begin{figure}
\centering
	\includegraphics[width=\columnwidth]{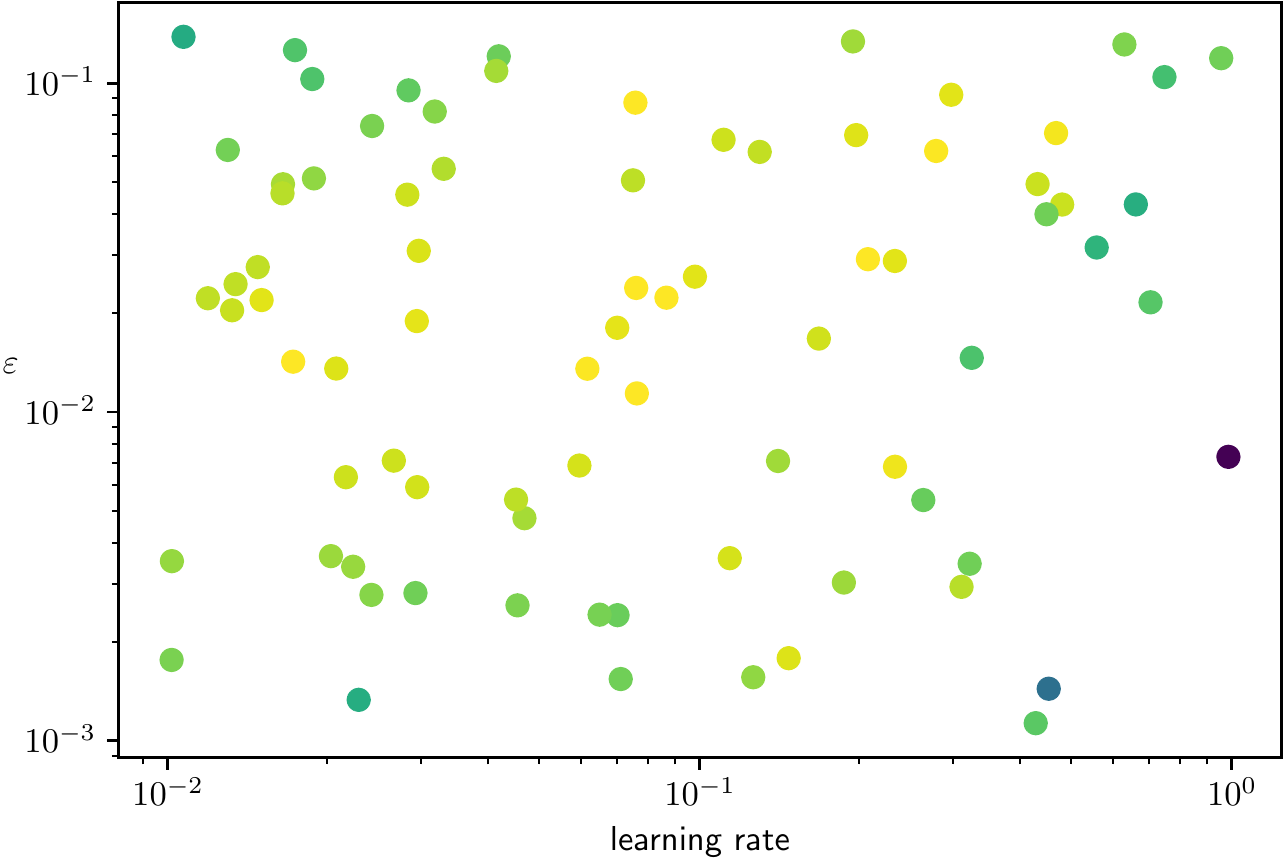}
	\caption{Sensitivity to hyperparameter choice on Vancouver downtown (the brighter color, the higher $\Val$).
	Final parameters were set to $\text{lr}=0.07$, $\varepsilon=0.025$.}
	\label{fig:vancouver_grid}
\end{figure}

\subsection{Detailed results for changing environment}

\paragraph{Utility changes (Table~\ref{tab:variable_cost})}

The cost of each node is increased by its $\changesize\%$ with probability 1/3, decreased by $\changesize\%$ with probability 1/3, or left unchanged. Here, $\changesize$ ranges from $5$ to $30$.
Note that utility changes can modify $\alpha_{\max}$ and thus influence $\Val$.
To compare the values, we normalize all results by $100/\alpha_{\max}$ for each $G_2$ and its~$\alpha_{\max}$.

\let\hc\relax
\definecolor{skyblue}{HTML}{97CCE8}
\definecolor{ourorange}{HTML}{ED6D52}
\definecolor{ourgreen}{HTML}{73FA79}
\renewcommand\hcglob[3]{\cellcolor{skyblue!\fpeval{((#1-#2)/(#3-#2))**2*100}!ourorange}#1}
\renewcommand\hcglobg[3]{\cellcolor{ourorange!\fpeval{((#1-#2)/(#3-#2))*100}!ourgreen}#1}

\begin{table}
	\small
	\centering
	\caption{Variable cost
	}
	\begin{adjustbox}{max width=\linewidth}
		\begin{tabular*}{\linewidth}{@{}l@{\extracolsep{\fill}}rd{3,2}d{3,2}d{3,2}d{3,2}}
			\toprule
			\multirow{2}{*}{$\changesize$} &\multirow{2}{*}{steps}& \multicolumn{2}{c}{ $\Val$} & \multicolumn{2}{c}{Security Hole} 
			\\
			&  & \multicolumn{1}{c}{from $\sigma_{1}$} & \multicolumn{1}{c}{from rnd} & \multicolumn{1}{c}{\phantom{x}from $\sigma_{1}$} & \multicolumn{1}{c}{from rnd}
			\\
			\midrule
			\multirow{5}{*}{5}
			\gdef\hc#1{\hcglob{#1}{12.7}{43.8}}
			\gdef\hcg#1{\hcglobg{#1}{0}{33.7}}
			& 0		& \hc{40.9}	+ 0.9	& \hc{12.7}	+ 3.2	& \hcg{0.0}	+ 0.0	& \hcg{ 4.6}	+ 2.7	\\
			& 50	& \hc{43.4}	+ 0.6	& \hc{27.4}	+ 0.8	& \hcg{2.8}	+ 1.2	& \hcg{14.4}	+ 1.2	\\
			& 100	& \hc{43.6}	+ 0.5	& \hc{37.3}	+ 1.2	& \hcg{3.9}	+ 1.9	& \hcg{25.3}	+ 2.6	\\
			& 200	& \hc{43.8}	+ 0.6	& \hc{41.3}	+ 0.6	& \hcg{5.2}	+ 2.8	& \hcg{30.2}	+ 4.4	\\
			& 400	& \hc{43.8}	+ 0.6	& \hc{42.7}	+ 0.5	& \hcg{6.9}	+ 4.4	& \hcg{33.7}	+ 3.9	\\
			\midrule                                                               
			\multirow{5}{*}{10}
			\gdef\hc#1{\hcglob{#1}{14.3}{46}}
			\gdef\hcg#1{\hcglobg{#1}{0}{33.1}}
			& 0		& \hc{40.9}	+ 0.9	& \hc{14.3}	+ 3.8	& \hcg{0.0}	+ 0.0	& \hcg{ 5.8}	+ 2.3	\\
			& 50	& \hc{45.3}	+ 0.8	& \hc{30.3}	+ 1.4	& \hcg{5.3}	+ 2.4	& \hcg{15.5}	+ 2.0	\\
			& 100	& \hc{45.7}	+ 0.9	& \hc{39.6}	+ 1.4	& \hcg{7.1}	+ 3.3	& \hcg{27.3}	+ 3.9	\\
			& 200	& \hc{45.9}	+ 1.0	& \hc{44.0}	+ 0.9	& \hcg{8.4}	+ 4.2	& \hcg{30.0}	+ 4.6	\\
			& 400	& \hc{46.0}	+ 1.0	& \hc{45.4}	+ 0.8	& \hcg{9.2}	+ 4.7	& \hcg{33.1}	+ 3.5	\\
			\midrule
			\multirow{5}{*}{20}
			\gdef\hc#1{\hcglob{#1}{14.4}{49.9}}
			\gdef\hcg#1{\hcglobg{#1}{0}{35.6}}
			& 0		& \hc{40.9}	+ 0.9	& \hc{14.4}	+ 4.0	& \hcg{ 0.0}	+ 0.0	& \hcg {5.4}	+ 1.9	\\
			& 50	& \hc{48.2}	+ 2.0	& \hc{35.4}	+ 2.8	& \hcg{ 7.7}	+ 1.4	& \hcg{19.0}	+ 2.8	\\
			& 100	& \hc{48.9}	+ 2.3	& \hc{44.0}	+ 2.2	& \hcg{ 9.3}	+ 2.6	& \hcg{27.3}	+ 4.3	\\
			& 200	& \hc{49.5}	+ 2.3	& \hc{48.4}	+ 2.4	& \hcg{10.3}	+ 3.4	& \hcg{33.5}	+ 4.5	\\
			& 400	& \hc{49.9}	+ 2.3	& \hc{49.8}	+ 2.2	& \hcg{12.4}	+ 3.2	& \hcg{35.6}	+ 3.9	\\
			\midrule
			\multirow{5}{*}{30}
			\gdef\hc#1{\hcglob{#1}{14.7}{54}}
			\gdef\hcg#1{\hcglobg{#1}{0}{40}}
			& 0		& \hc{40.9}	+ 0.9	& \hc{14.7}	+ 4.2	& \hcg{ 0.0}	+ 0.0	& \hcg{ 6.1}	+ 2.2	\\
			& 50	& \hc{50.4}	+ 3.4	& \hc{38.2}	+ 4.4	& \hcg{ 9.7}	+ 1.9	& \hcg{19.6}	+ 3.3	\\
			& 100	& \hc{51.4}	+ 3.6	& \hc{48.7}	+ 2.8	& \hcg{11.3}	+ 3.1	& \hcg{30.1}	+ 5.6	\\
			& 200	& \hc{52.4}	+ 3.8	& \hc{52.8}	+ 3.5	& \hcg{14.0}	+ 2.3	& \hcg{39.4}	+ 4.1	\\
			& 400	& \hc{52.9}	+ 3.7	& \hc{54.0}	+ 3.3	& \hcg{15.2}	+ 3.2	& \hcg{40.0}	+ 5.6	\\
			\bottomrule
		\end{tabular*}
	\end{adjustbox}
	\label{tab:variable_cost}
\end{table}

\paragraph{Variable edge length (Table~\ref{tab:variable_length})}

As in the previous case, the length of each edge is increased/decreased by $\changesize\%$ or kept unchanged (with the same probability).

\begin{table}
	\small
	\centering
	\caption{Variable length
	}
	\begin{adjustbox}{max width=\linewidth}
		\begin{tabular*}{\linewidth}{@{}l@{\extracolsep{\fill}}rd{3,3}d{3,2}d{3,2}d{3,2}}
			\toprule
			\multirow{2}{*}{$\changesize$} & \multirow{2}{*}{steps} & \multicolumn{2}{c}{ $\Val$ } & \multicolumn{2}{c}{Security Hole} 
			\\
			& & \multicolumn{1}{c}{from $\sigma_{1}$} & \multicolumn{1}{c}{from rnd} & \multicolumn{1}{c}{from $\sigma_{1}$} & \multicolumn{1}{c}{from rnd}
			\\
			\midrule\multirow{5}{*}{5}
			\gdef\hc#1{\hcglob{#1}{9.4}{41}}
			\gdef\hcg#1{\hcglobg{#1}{0.6}{33.8}}
			& 0     & \hc{36.4} + 2.2    & \hc{9.4} + 0.3     & \hcg{0.6} + 1.0     & \hcg{2.9} + 0.3     \\
			& 50    & \hc{40.3} + 0.9    & \hc{24.3} + 0.6    & \hcg{5.0} + 1.9     & \hcg{13.0} + 1.1    \\
			& 100   & \hc{40.6} + 0.9    & \hc{34.7} + 0.7    & \hcg{6.5} + 2.6     & \hcg{26.2} + 2.5    \\
			& 200   & \hc{40.9} + 0.9    & \hc{39.3} + 0.7    & \hcg{8.8} + 3.1     & \hcg{31.9} + 2.3    \\
			& 400   & \hc{41.0} + 0.9    & \hc{40.4} + 0.8    & \hcg{9.4} + 3.1     & \hcg{33.8} + 2.7    \\
			\midrule\multirow{5}{*}{10}          
			\gdef\hcg#1{\hcglobg{#1}{0.1}{34.7}}
			& 0     & \hc{33.1} + 4.2    & \hc{9.4} + 0.5     & \hcg{0.1} + 0.2     & \hcg{2.9} + 0.7     \\
			& 50    & \hc{39.3} + 1.8    & \hc{24.5} + 1.0    & \hcg{6.5} + 2.7     & \hcg{12.7} + 1.3    \\
			& 100   & \hc{40.0} + 1.5    & \hc{35.4} + 1.0    & \hcg{9.5} + 2.5     & \hcg{25.6} + 1.9    \\
			& 200   & \hc{40.6} + 1.5    & \hc{39.8} + 1.0    & \hcg{9.9} + 3.3     & \hcg{32.3} + 2.3    \\
			& 400   & \hc{41.0} + 1.6    & \hc{40.9} + 1.2    & \hcg{10.7} + 3.3    & \hcg{34.7} + 1.7    \\
			\midrule\multirow{5}{*}{20}          
			\gdef\hc#1{\hcglob{#1}{8.9}{42.4}}
			\gdef\hcg#1{\hcglobg{#1}{0}{34.3}}
			& 0     & \hc{20.9} + 10.1   & \hc{8.9} + 0.7     & \hcg{0.0} + 0.1     & \hcg{2.3} + 1.0     \\
			& 50    & \hc{36.1} + 3.7    & \hc{24.6} + 1.7    & \hcg{7.6} + 5.9     & \hcg{13.3} + 1.4    \\
			& 100   & \hc{37.9} + 2.9    & \hc{36.5} + 1.7    & \hcg{8.6} + 4.1     & \hcg{28.1} + 2.0    \\
			& 200   & \hc{39.1} + 2.3    & \hc{41.0} + 1.7    & \hcg{12.9} + 4.1    & \hcg{32.7} + 2.2    \\
			& 400   & \hc{40.4} + 2.1    & \hc{42.4} + 1.9    & \hcg{14.2} + 3.7    & \hcg{34.3} + 3.4    \\
			\midrule\multirow{5}{*}{30}          
			\gdef\hc#1{\hcglob{#1}{8.1}{44.6}}
			\gdef\hcg#1{\hcglobg{#1}{0.1}{34.1}}
			& 0     & \hc{16.3} + 10.1   & \hc{8.1} + 1.1     & \hcg{0.1} + 0.4     & \hcg{1.7} + 1.2     \\
			& 50    & \hc{33.6} + 5.0    & \hc{24.0} + 2.5    & \hcg{5.4} + 5.4     & \hcg{13.2} + 2.8    \\
			& 100   & \hc{36.4} + 3.7    & \hc{37.8} + 2.4    & \hcg{10.2} + 4.6    & \hcg{27.5} + 2.3    \\
			& 200   & \hc{38.7} + 3.2    & \hc{42.7} + 2.2    & \hcg{14.2} + 5.3    & \hcg{33.9} + 2.6    \\
			& 400   & \hc{40.8} + 2.6    & \hc{44.4} + 2.9    & \hcg{20.9} + 3.8    & \hcg{34.1} + 3.5    \\
			\bottomrule                            
		\end{tabular*}
	\end{adjustbox}
	\label{tab:variable_length}
\end{table}

\paragraph{Removed edges (Table~\ref{tab:removed_edges})}

We randomly delete $\changesize=1,2,4,8$ edges so that $G_2$ remains strongly connected.

\begin{table}
	\small
	\centering
	\caption{Removed edges
	}
	\begin{adjustbox}{max width=\linewidth}
		\begin{tabular*}{\linewidth}{@{}l@{\extracolsep{\fill}}rd{3,3}d{3,2}d{3,2}d{3,2}}
			\toprule
			\multirow{2}{*}{$\changesize$} & \multirow{2}{*}{steps}& \multicolumn{2}{c}{ $\Val$ } & \multicolumn{2}{c}{Security Hole} 
			\\
			& & \multicolumn{1}{c}{from $\sigma_{1}$} & \multicolumn{1}{c}{from rnd} & \multicolumn{1}{c}{from $\sigma_{1}$} & \multicolumn{1}{c}{from rnd}
			\\
			\midrule\multirow{5}{*}{1}
			\gdef\hc#1{\hcglob{#1}{9.6}{42.1}}
			\gdef\hcg#1{\hcglobg{#1}{0}{34.5}}
			& 0     & \hc{39.5} + 5.3    & \hc{9.6} + 0.3     & \hcg{0.0} + 0.0     & \hcg{3.1} + 0.4     \\
			& 50    & \hc{42.0} + 0.2    & \hc{24.6} + 0.3    & \hcg{0.9} + 1.5     & \hcg{13.3} + 0.8    \\
			& 100   & \hc{42.1} + 0.1    & \hc{35.1} + 0.7    & \hcg{0.9} + 1.5     & \hcg{24.9} + 2.1    \\
			& 200   & \hc{42.1} + 0.1    & \hc{39.4} + 0.3    & \hcg{0.9} + 1.4     & \hcg{31.9} + 3.3    \\
			& 400   & \hc{42.1} + 0.1    & \hc{40.9} + 0.5    & \hcg{1.3} + 1.2     & \hcg{34.5} + 2.3    \\
			\midrule\multirow{5}{*}{2}
			\gdef\hcg#1{\hcglobg{#1}{0}{35.7}}
			& 0     & \hc{37.1} + 5.5    & \hc{9.6} + 0.3     & \hcg{0.0} + 0.0     & \hcg{3.0} + 0.4     \\
			& 50    & \hc{41.8} + 0.3    & \hc{24.7} + 0.3    & \hcg{1.6} + 1.8     & \hcg{12.8} + 1.4    \\
			& 100   & \hc{41.9} + 0.2    & \hc{34.9} + 0.5    & \hcg{1.6} + 1.8     & \hcg{24.4} + 3.1    \\
			& 200   & \hc{42.0} + 0.2    & \hc{39.6} + 0.3    & \hcg{1.8} + 1.6     & \hcg{32.8} + 2.2    \\
			& 400   & \hc{42.1} + 0.2    & \hc{41.0} + 0.2    & \hcg{2.2} + 1.4     & \hcg{35.7} + 2.1    \\
			\midrule\multirow{5}{*}{4}
			\gdef\hc#1{\hcglob{#1}{9.8}{41.7}}
			\gdef\hcg#1{\hcglobg{#1}{0.1}{36.8}}
			& 0     & \hc{28.0} + 9.8    & \hc{9.8} + 0.3     & \hcg{0.1} + 0.5     & \hcg{3.4} + 0.4     \\
			& 50    & \hc{40.4} + 1.7    & \hc{24.9} + 0.4    & \hcg{4.4} + 3.1     & \hcg{13.3} + 1.2    \\
			& 100   & \hc{41.1} + 1.1    & \hc{35.2} + 0.6    & \hcg{5.9} + 4.1     & \hcg{26.2} + 1.5    \\
			& 200   & \hc{41.5} + 0.6    & \hc{39.6} + 0.8    & \hcg{6.2} + 4.3     & \hcg{31.8} + 4.1    \\
			& 400   & \hc{41.7} + 0.4    & \hc{40.8} + 0.3    & \hcg{6.6} + 3.7     & \hcg{36.8} + 2.5    \\
			\midrule\multirow{5}{*}{8}
			\gdef\hc#1{\hcglob{#1}{9.8}{40.9}}
			\gdef\hcg#1{\hcglobg{#1}{0}{34.4}}
			& 0     & \hc{17.0} + 11.4   & \hc{9.8} + 0.9     & \hcg{0.0} + 0.0     & \hcg{3.3} + 0.6     \\
			& 50    & \hc{36.1} + 9.1    & \hc{24.8} + 0.3    & \hcg{8.2} + 5.3     & \hcg{13.2} + 1.7    \\
			& 100   & \hc{39.7} + 1.9    & \hc{35.3} + 0.9    & \hcg{10.3} + 5.1    & \hcg{25.5} + 2.4    \\
			& 200   & \hc{40.4} + 1.4    & \hc{39.0} + 0.7    & \hcg{10.9} + 5.1    & \hcg{29.8} + 2.1    \\
			& 400   & \hc{40.9} + 1.0    & \hc{40.5} + 0.7    & \hcg{11.8} + 5.5    & \hcg{34.4} + 2.6    \\
			\bottomrule
		\end{tabular*}
	\end{adjustbox}
	\label{tab:removed_edges}
\end{table}


\begin{thebibliography}{29}
\providecommand{\natexlab}[1]{#1}
\providecommand{\url}[1]{\texttt{#1}}
\expandafter\ifx\csname urlstyle\endcsname\relax
  \providecommand{\doi}[1]{doi: #1}\else
  \providecommand{\doi}{doi: \begingroup \urlstyle{rm}\Url}\fi

\bibitem[Agmon et~al.(2008{\natexlab{a}})Agmon, Kraus, and
  Kaminka]{AKK:multi-robot-perimeter-adversarial}
N.~Agmon, S.~Kraus, and G.~Kaminka.
\newblock Multi-robot perimeter patrol in adversarial settings.
\newblock In \emph{Proceedings of ICRA 2008}, pages 2339--2345. IEEE Computer
  Society Press, 2008{\natexlab{a}}.

\bibitem[Agmon et~al.(2008{\natexlab{b}})Agmon, Sadov, Kaminka, and
  Kraus]{ASKK:perimeter-patrol}
N.~Agmon, V.~Sadov, G.A.{} Kaminka, and S.~Kraus.
\newblock The impact of adversarial knowledge on adversarial planning in
  perimeter patrol.
\newblock In \emph{Proceedings of AAMAS 2008}, pages 55--62,
  2008{\natexlab{b}}.

\bibitem[An et~al.(2014)An, Shieh, Yang, Tambe, Baldwin, DiRenzo, Maule, and
  Meyer]{ASYTBDMM:Protect-AIMagazine}
B.~An, E.~Shieh, R.~Yang, M.~Tambe, C.~Baldwin, J.~DiRenzo, B.~Maule, and
  G.~Meyer.
\newblock Protect---a deployed game theoretic system for strategic security
  allocation for the {United States} coast guard.
\newblock \emph{AI Magazine}, 33\penalty0 (4):\penalty0 96--110, 2014.

\bibitem[Basilico et~al.(2009)Basilico, Gatti, and
  Amigoni]{BGA:patrolling-arbitrary-topologies}
N.~Basilico, N.~Gatti, and F.~Amigoni.
\newblock Leader-follower strategies for robotic patrolling in environments
  with arbitrary topologies.
\newblock In \emph{Proceedings of AAMAS 2009}, pages 57--64, 2009.

\bibitem[Basilico et~al.(2012)Basilico, Gatti, and
  Amigoni]{BGA:large-patrol-AI}
N.~Basilico, N.~Gatti, and F.~Amigoni.
\newblock Patrolling security games: Definitions and algorithms for solving
  large instances with single patroller and single intruder.
\newblock \emph{Artificial Inteligence}, 184--185:\penalty0 78--123, 2012.

\bibitem[Biswas et~al.(2021)Biswas, Aggarwal, Varakantham, and
  Tambe]{BAVT:learn-preventive-healthcare}
A.~Biswas, G.~Aggarwal, P.~Varakantham, and M.~Tambe.
\newblock Learn to intervene: An adaptive learning policy for restless bandits
  in application to preventive healthcare.
\newblock In \emph{Proceedings of IJCAI 2021}, 2021.

\bibitem[Br{\'{a}}zdil et~al.(2018)Br{\'{a}}zdil, Ku{\v{c}}era, and
  {\v{R}}eh{\'{a}}k]{BKR:patrol-Internet}
T.~Br{\'{a}}zdil, A.~Ku{\v{c}}era, and V.~{\v{R}}eh{\'{a}}k.
\newblock Solving patrolling problems in the internet environment.
\newblock In \emph{Proceedings of IJCAI 2018}, pages 121--127, 2018.

\bibitem[Chen et~al.(2016)Chen, Wu, Shen, Chen, and
  Ramchurn]{Decentralized-Patrolling}
Shaofei Chen, Feng Wu, Lincheng Shen, Jing Chen, and Sarvapali~D. Ramchurn.
\newblock Decentralized patrolling under constraints in dynamic environments.
\newblock \emph{IEEE Transactions on Cybernetics}, 46\penalty0 (12):\penalty0
  3364--3376, 2016.

\bibitem[Das et~al.(2019)Das, Di~Luna, and Gasieniec]{dynamic-patrol-ring}
Shantanu Das, Giuseppe~A. Di~Luna, and Leszek~A. Gasieniec.
\newblock Patrolling on dynamic ring networks.
\newblock In \emph{SOFSEM 2019: Theory and Practice of Computer Science}, pages
  150--163. Springer, 2019.

\bibitem[Fave et~al.(2014)Fave, Jiang, Yin, Zhang, Tambe, Kraus, and
  Sullivan]{DJYZTKS:patrolling-uncertainty-JAIR}
F.M.{}~Delle Fave, A.X.{} Jiang, Z.~Yin, C~. Zhang, M.~Tambe, S.~Kraus, and
  J.~Sullivan.
\newblock Game-theoretic security patrolling with dynamic exe cution
  uncertainty and a case study on a real transit system.
\newblock \emph{Journal of Artificial Intelligence Research}, 50:\penalty0
  321--367, 2014.

\bibitem[Ford et~al.(2014)Ford, Kar, Fave, Yang, and Tambe]{FKDYT:PAWS}
B.~Ford, D.~Kar, F.M.{}~Delle Fave, R.~Yang, and M.~Tambe.
\newblock {PAWS:} adaptive game-theoretic patrolling for wildlife protection.
\newblock In \emph{Proceedings of AAMAS 2014}, pages 1641--1642, 2014.

\bibitem[Ho and Ouaknine(2015)]{HO:UAV-problem-PSPACE}
Hsi-Ming Ho and J.~Ouaknine.
\newblock The cyclic-routing {UAV} problem is {PSPACE}-complete.
\newblock In \emph{Proceedings of {FoSSaCS 2015}}, volume 9034 of \emph{LNCS},
  pages 328--342. Springer, 2015.

\bibitem[Hoshino and Takahashi(2019)]{JRM:dynamic-patrolling}
Satoshi Hoshino and Kazuki Takahashi.
\newblock Dynamic partitioning strategies for multi-robot patrolling systems.
\newblock \emph{Journal of Robotics and Mechatronics}, 31\penalty0
  (4):\penalty0 535--545, 2019.

\bibitem[Karwowski et~al.(2019)Karwowski, Mandziuk, Zychowski, Grajek, and
  An]{KMZGA:moving_targets}
J.~Karwowski, J.~Mandziuk, A.~Zychowski, F.~Grajek, and B.~An.
\newblock A memetic approach for sequential security games on a plane with
  moving targets.
\newblock In \emph{Proceedings of AAAI 2019}, pages 970--977, 2019.

\bibitem[Kingma and Ba(2015)]{Adam}
Diederik~P. Kingma and Jimmy Ba.
\newblock Adam: A method for stochastic optimization.
\newblock In \emph{Proceedings of {ICLR} 2015}, 2015.

\bibitem[Kla{\v{s}}ka et~al.(2018)Kla{\v{s}}ka, Ku{\v{c}}era, Lamser, and
  {\v{R}}eh{\'{a}}k]{KKLR:patrol-gradient}
D.~Kla{\v{s}}ka, A.~Ku{\v{c}}era, T.~Lamser, and V.~{\v{R}}eh{\'{a}}k.
\newblock Automatic synthesis of efficient regular strategies in adversarial
  patrolling games.
\newblock In \emph{Proceedings of AAMAS 2018}, pages 659--666, 2018.

\bibitem[Kla{\v{s}}ka et~al.(2020)Kla{\v{s}}ka, Ku{\v{c}}era, and
  {\v{R}}eh{\'{a}}k]{KKR:patrol-drones}
D.~Kla{\v{s}}ka, A.~Ku{\v{c}}era, and V.~{\v{R}}eh{\'{a}}k.
\newblock Adversarial patrolling with drones.
\newblock In \emph{Proceedings of AAMAS 2020}, pages 629--637, 2020.

\bibitem[Kla{\v{s}}ka et~al.(2021)Kla{\v{s}}ka, Ku{\v{c}}era, Musil, and
  {\v{R}}eh{\'{a}}k]{KKMR:Regstar-UAI}
D.~Kla{\v{s}}ka, A.~Ku{\v{c}}era, V.~Musil, and V.~{\v{R}}eh{\'{a}}k.
\newblock Regstar: Efficient strategy synthesis for adversarial patrolling
  games.
\newblock In \emph{Proceedings of UAI 2021}, 2021.

\bibitem[Ku{\v{c}}era and Lamser(2016)]{KL:patrol-regular}
A.~Ku{\v{c}}era and T.~Lamser.
\newblock Regular strategies and strategy improvement: Efficient tools for
  solving large patrolling problems.
\newblock In \emph{Proceedings of AAMAS 2016}, pages 1171--1179, 2016.

\bibitem[Othmani-Guibourg et~al.(2017)Othmani-Guibourg, Fallah-Seghrouchni,
  Farges, and Potop-Butucaru]{Multi-agent-dynamic}
Mehdi Othmani-Guibourg, Amal~El Fallah-Seghrouchni, Jean-Loup Farges, and Maria
  Potop-Butucaru.
\newblock Multi-agent patrolling in dynamic environments.
\newblock In \emph{2017 IEEE International Conference on Agents (ICA)}, pages
  72--77, 2017.
\newblock \doi{10.1109/AGENTS.2017.8015305}.

\bibitem[Pita et~al.(2008)Pita, Jain, Marecki, Ord{\'{o}}nez, Portway, Tambe,
  Western, Paruchuri, and Kraus]{PJMOPTWPK:Deployed-ARMOR}
J.~Pita, M.~Jain, J.~Marecki, F.~Ord{\'{o}}nez, C.~Portway, M.~Tambe,
  C.~Western, P.~Paruchuri, and S.~Kraus.
\newblock Deployed {ARMOR} protection: The application of a game theoretic
  model for security at the {Los} {Angeles} {Int.} {Airport}.
\newblock In \emph{Proceedings of AAMAS 2008}, pages 125--132, 2008.

\bibitem[Rumelhart et~al.(1986)Rumelhart, Hinton, and
  Williams]{rumelhart1986learning}
David~E Rumelhart, Geoffrey~E Hinton, and Ronald~J Williams.
\newblock Learning representations by back-propagating errors.
\newblock \emph{nature}, 323\penalty0 (6088):\penalty0 533--536, 1986.

\bibitem[Seok et~al.(2017)Seok, Faied, and
  Girard]{Dynamic-Environment-Patrolling}
Jinwoo Seok, Mariam Faied, and Anouck Girard.
\newblock Unpredictably dynamic environment patrolling.
\newblock \emph{Unmanned Systems}, 5\penalty0 (4):\penalty0 223--236, 2017.

\bibitem[Sinha et~al.(2018)Sinha, Fang, An, Kiekintveld, and
  Tambe]{SFAKT:Stackelberg-Security-Games}
A.~Sinha, F.~Fang, B.~An, C.~Kiekintveld, and M.~Tambe.
\newblock Stackelberg security games: Looking beyond a decade of success.
\newblock In \emph{Proceedings of IJCAI 2018}, pages 5494--5501, 2018.

\bibitem[Tambe(2011)]{Tambe:book}
M.~Tambe.
\newblock \emph{Security and Game Theory. Algorithms, Deployed Systems, Lessons
  Learned}.
\newblock Cambridge University Press, 2011.

\bibitem[Tsai et~al.(2009)Tsai, Rathi, Kiekintveld, Ord{\'o}{\~{n}}ez, and
  Tambe]{TRKOT:IRIS}
J.~Tsai, S.~Rathi, C.~Kiekintveld, F.~Ord{\'o}{\~{n}}ez, and M.~Tambe.
\newblock {IRIS}---a tool for strategic security allocation in transportation
  networks categories and subject descriptors.
\newblock In \emph{Proceedings of AAMAS 2009}, pages 37--44, 2009.

\bibitem[Wang et~al.(2019)Wang, Shi, Yu, Wu, Singh, Joppa, and
  Fang]{WSYWSJF:Patrolling-learning}
Y.~Wang, Z.R.{} Shi, L.~Yu, Y.~Wu, R.~Singh, L.~Joppa, and F.~Fang.
\newblock Deep reinforcement learning for green security games with real-time
  information.
\newblock In \emph{Proceedings of AAAI 2019}, pages 1401--1408, 2019.

\bibitem[Xu(2021)]{Xu:Green-security}
L.~Xu.
\newblock Learning and planning under uncertainty for green security.
\newblock In \emph{Proceedings of IJCAI 2021}, 2021.

\bibitem[Yin et~al.(2010)Yin, Korzhyk, Kiekintveld, Conitzer, and
  Tambe]{YKKCT:Stackelberg-Nash-security}
Z.~Yin, D.~Korzhyk, C.~Kiekintveld, V.~Conitzer, and M.~Tambe.
\newblock Stackelberg vs.{} {Nash} in security games: Interchangeability,
  equivalence, and uniqueness.
\newblock In \emph{Proceedings of AAMAS 2010}, pages 1139--1146, 2010.

\end{thebibliography}
\end{document}